\pgfplotsset{compat=1.15}
\newcommand{\vbar}{\raisebox{.17ex}{\rule{.04em}{1.35ex}}}
\newcommand{\vbarind}{\raisebox{.01ex}{\rule{.04em}{1.1ex}}}
\newcommand{\R}{\ifmmode{\rm I}\hspace{-.2em}{\rm R} \else ${\rm I}\hspace{-.2em}{\rm R}$ \fi}
\newcommand{\T}{\ifmmode{\rm I}\hspace{-.2em}{\rm T} \else ${\rm I}\hspace{-.2em}{\rm T}$ \fi}
\newcommand{\N}{\ifmmode{\rm I}\hspace{-.2em}{\rm N} \else \mbox{${\rm I}\hspace{-.2em}{\rm N}$} \fi}
\newcommand{\B}{\ifmmode{\rm I}\hspace{-.2em}{\rm B} \else \mbox{${\rm I}\hspace{-.2em}{\rm B}$} \fi}
\newcommand{\Hil}{\ifmmode{\rm I}\hspace{-.2em}{\rm H} \else \mbox{${\rm I}\hspace{-.2em}{\rm H}$} \fi}
\newcommand{\C}{\ifmmode\hspace{.2em}\vbar\hspace{-.31em}{\rm C} \else \mbox{$\hspace{.2em}\vbar\hspace{-.31em}{\rm C}$} \fi}
\newcommand{\Cind}{\ifmmode\hspace{.2em}\vbarind\hspace{-.25em}{\rm C} \else \mbox{$\hspace{.2em}\vbarind\hspace{-.25em}{\rm C}$} \fi}
\newcommand{\Q}{\ifmmode\hspace{.2em}\vbar\hspace{-.31em}{\rm Q} \else \mbox{$\hspace{.2em}\vbar\hspace{-.31em}{\rm Q}$} \fi}
\newcommand{\Z}{\ifmmode{\rm Z}\hspace{-.28em}{\rm Z} \else ${\rm Z}\hspace{-.28em}{\rm Z}$ \fi}
\newtheorem{thm}{Theorem}
\newtheorem{lem}{Lemma}
\newtheorem{conj}{Conjecture}
\theoremstyle{definition}
\newtheorem{defn}{Definition}
\newtheorem{rem}{Remark}
\title{Finite-Length Analysis of Wiretap Codes using Universal Hash Functions}
\begin{document}

\author{\IEEEauthorblockN{Kagan Akcay} \\
\IEEEauthorblockA{
     Electrical Engineering and Computer Science Department, Technische Universit\"at Berlin, 10587 Berlin, Germany\\
    \textrm{kagan.akcay@tu-berlin.de}
    }
}

\maketitle

\begin{abstract}
    This paper investigates the relation between the second-order coding rate, where the second-order turns out to be strictly larger than $\sqrt{n}$, and the mutual information as the leaked information for a fixed error probability by using wiretap codes constructed by universal$_2$ hash functions. We first generalize the upper bound on $\epsilon$-smooth max information in \cite{tyagi} and use it in our analysis where we adopt the method in \cite{hayashi-tan}, which uses universal hashing for compressing a source and making it secure from another correlated source, and apply it to the wiretap channel. We prove first- and second-order achievability results by assuming that the conjecture we state holds true.
\end{abstract}

\section{Introduction}

The study of secure communication over a noisy channel with an eavesdropper, which is called a wiretap channel, began with Wyner \cite{wyner} and continued with Csisz\'ar and K\"orner  \cite{generalwiretap}. It was shown that if the transmission rate is below the secrecy capacity, the error probability and the information leakage can be made arbitrarily small as long as the number of channel usage is sufficiently large.  Whereas the studies above analyze the wiretap channel in the asymptotic regime, the analysis in the finite length regime is also studied in several papers such as \cite{yang-schaefer-poor,variation2,hayashi}, where \cite{yang-schaefer-poor,variation2} adopt the leaked information criterion based on the variational distance and \cite{hayashi} on the mutual information. This paper adopts the leaked information criterion based on mutual information.

The main purpose of this paper is to find the relation between the second-order coding rate and the leaked information for a fixed error probability. Since the mutual information can be unbounded, 
we don't fix the leaked information together with the error probability and analyze the asymptotic expansion as done in \cite{yang-schaefer-poor} where the variational distance as the leaked information is bounded. We fix only the error probability and investigate the first- and second-order asymptotics of the leaked information by adopting the method in \cite{hayashi-tan}.~\cite{hayashi-tan} proves first- and second-order results 
for the mutual information between a compressed version of a source $A^n$, where universal hashing is used for compression, and another correlated source $E^n$, where $(A^n,E^n)$ is an independent identical distributed (i.i.d) sequence. We follow the achievability proof method of~\cite{hayashi-tan} 
and apply it to the wiretap channel by proving an upper bound on $\epsilon$-max smooth information. 

In Section~\ref{sec:Sys_mod}, we define the wiretap channel and the secrecy codes. In Section~\ref{sec:types}, we prove new upper bounds on the size of the typical sets and the channel probability of the elements in a typical set to be used in the upper bound on $\epsilon$-max information in Section~\ref{sec:max-info}. In Section~\ref{sec:divergence}, we prove an upper bound on the mutual information between any two random variables taking values on finite sets by $\epsilon$-max information and an error term by proving that the R\'enyi divergence of order $2$ is upper bounded by max information. In Section~\ref{sec:hashing}, we state the definition of a universal$_2$ hash function and define a new kind of a universal$_2$ hash function, which we call a c-universal$_2$ hash function, with the extra condition that the cardinality of the image of its right inverse does not depend on its argument as in~\cite{hayashi}. 
We also state a lemma from \cite{hayashi-tan} which shows a relation 
between the R\'enyi divergence between the compressed version of $A$ and  $E$, and the conditional R\'enyi entropy between $A$ and $E$, where $(A,E)$ is a correlated source, to be used for our finite-length achievability analysis of the wiretap channel in the next section. In Section~\ref{sec:Achievability_bounds}, we prove first- and second-order achievability results for fixed error probability by constructing wiretap codes using c-universal$_2$ hash functions and assuming that certain conjecture holds true. 
The second-order turns out to be strictly larger than $\sqrt{n}$, whereas in \cite{yang-schaefer-poor,variation2,hayashi-tan} it is exactly $\sqrt{n}$.


\section{Preliminaries} 


\begin{defn} 
\cite{hayashi-tan} Let \textbf{A} be a finite set where $P$ and $Q$ are two distributions on $\textbf{A}$. The R\'enyi divergence of order $1+s$ is given by
$$D_{1+s}(P\vert{}\vert{}Q):=\frac{1}{s}\log{}\sum_{a\in{\textbf{A}}}P(a)^{1+s}Q(a)^{-s}$$
for every $s>0$. It is easy to show that $D_1(P\vert{}\vert{}Q):=\lim_{s\rightarrow{0}}D_{1+s}(P\vert{}\vert{}Q)=D(P\vert{}\vert{}Q)$.
\end{defn}

\begin{defn} 
\cite{hayashi-tan} Let $\textbf{A}$ and $\textbf{E}$ be two sets with finite cardinality, let $(A,E)$ be a pair of joitnly distributed random variables on $\textbf{A} \times \textbf{E}$. The conditional R\'enyi entropy of order $1+s$ of $A$ given $E$ is given by
$$H_{1+s}(A\vert{}E):=-\frac{1}{s}\log{}\sum_eP_E(e)\sum_aP_{A\vert{}E}(a\vert{}e)^{1+s}$$
for $s>0$. It is easy to show that $H_1(A\vert{}E):=\lim_{s\rightarrow{0}}H_{1+s}(A\vert{}E)=H(A\vert{}E)$.
\end{defn}



\begin{defn} 
Let \textbf{X} and \textbf{Y} be finite sets, where $P$ is a distribution on \textbf{X} and $W:\textbf{X}\rightarrow{\textbf{Y}}$ is a stochastic matrix. The mutual information is then given by
$$I(P;W):=H(PW)-H(W\vert{}P)$$
where $PW(y)=\sum_{x\in{\textbf{X}}}P(x)W(y\vert{}x)$ for all $y\in{\textbf{Y}}$. If $(X,Y)$ are jointly distributed on $\textbf{X} \times \textbf{Y}$ and $P_{Y\vert{}X}=W$, we write $I(X\wedge{}Y)$ for $I(P_X;W)$.
\end{defn}
%
%
\section{Channel Model}
\label{sec:Sys_mod}
The wiretap channel was first introduced by Wyner in \cite{wyner}. In the wiretap channel, we have a sender, a legitimate receiver, and an eavesdropper. The sender aims to transmit a message reliably to the legitimate receiver while keeping it secret from the eavesdropper. In the discrete memoryless degraded wiretap channel (DM-DWTC), we have $V^n : \textbf{X}^n\rightarrow{\textbf{Y}^n}$ and $W^n : \textbf{Y}^n\rightarrow{\textbf{E}^n}$ where $V$ and $W$ are discrete memoryless channels (DMCs), \textbf{X}, \textbf{Y} and \textbf{E} are finite sets and $n$ is the number of the channel usage. $V$ corresponds to the channel from the sender to the legitimate receiver, whereas $W$ corresponds to the degraded channel to the eavesdropper. 
A secrecy code $C_n$ for a DM-WTC consists of
\begin{itemize}
\item a message $M$ which distributed on a message set $\textbf{M}=\lbrace1,...,\vert{}\textbf{M}\vert{}\rbrace$,
\item a randomized encoder $e:\textbf{M}\rightarrow{\textbf{X}^n}$ which generates a codeword $x^n$ according to a probability distribution $P_{X^n\vert{}M=m}$,
\item a decoder $d:\textbf{Y}^n\rightarrow{\textbf{M}}$ which maps each channel observation $y^n$ to an estimate $\hat{m}\in{\textbf{M}}$.
\end{itemize}
We require that $M$ is uniformly distributed on $\textbf{M}$ and the encoder and the decoder of the legitimate receiver satisfy the average error probability constraint $Pr(d(Y^n)\neq{}M)\leq{}\varepsilon$. 
%
We assume that the channel statistics and properties of the code $C_n$ are known to all parties. However, the eavesdropper cannot access realizations $x^n$ and $y^n$. We use mutual information $I(M\wedge{}E^n)$ as the secrecy metric. We fix $\varepsilon$ and let the message set $\vert{}\textbf{M}\vert{}=e^{nR+(\sqrt{n})^{1+\gamma}L}$ where $0<\gamma<1$ and $\gamma$ is fixed and analyse the asymptotic behaviour of $I(M\wedge{}E^n)$ with respect to $\varepsilon$, $R$ and $L$. The reason why $\gamma>0$ and not exactly $0$ as in~\cite{yang-schaefer-poor,variation2,hayashi-tan} will be clear in Section~\ref{sec:Achievability_bounds}.

\section{Types and Typicality} 
\label{sec:types}
\begin{defn}   \label{type}
\cite[Definition 2.1]{info} The type of a sequence $x^n\in{\textbf{X}^n}$ is the distribution $P_{x^n}$ on $\textbf{X}$ defined by
$$P_{x^n}(a):=\frac{1}{n}N(a\vert{}x^n), \;\; \forall \; a \in \textbf{X}$$
where $N(a|x^n)$ denotes the multiplicity of symbol $a$ in $x^n$. For any distribution $P$ on $\textbf{X}$, the set of sequences of type $P$ in $\textbf{X}^n$ is denoted by $T^n_P$.
\end{defn}
\begin{defn} \label{typical}
\cite[Definition 2.8]{info} For any distribution $P$ on $\textbf{X}$, a sequence $x^n\in{\textbf{X}^n}$ is called $P$-typical with constant $\delta$ if
$$\vert{}N(a\vert{}x^n)/n-P(a)\vert{}\leq{\delta}$$
for every $a\in{\textbf{X}}$ and, in addition, no $a\in{\textbf{X}}$ with $P(a)=0$ occurs in $x^n$. The set of such sequences will be denoted by $T^n_{[P]_\delta}$. If $X$ is a random variable with values in $\textbf{X}$, we refer to $P_X$-typical sequences as $X$-typical, and write $T^n_{[X]_\delta}$ for $T^n_{[P_X]_\delta}$.
\end{defn}
\begin{defn}  \label{conditionaltype}   
\cite[Definition 2.9]{info} For a stochastic matrix $W : \textbf{X}\rightarrow{\textbf{Y}}$, a sequence $y^n\in{\textbf{Y}^n}$ is $W$-typical under the condition $x^n\in{\textbf{X}^n}$ with constant $\delta$ if
$$\left\vert{}\frac{1}{n}N(a,b \vert{}x^n,y^n)-\frac{1}{n}N(a\vert{}x^n)W(b\vert{}a)\right\vert{}\leq{\delta}$$
for every $a\in{\textbf{X}}$, $b\in{\textbf{Y}}$, and, in addition, $N(a,b\vert{}x^n,y^n)=0$ whenever $W(b\vert{}a)=0$. The set of such sequences $y^n$ will be denoted by $T^n_{[W]_\delta}(x^n)$. If $X$ and $Y$ are random variables with values in $\textbf{X}$ resp. $\textbf{Y}$ and $P_{Y\vert{}X}=W$, then we speak of $Y\vert{}X$-typical sequences and write $T^n_{[Y\vert{}X]_\delta}(x^n)$ for $T^n_{[W]_\delta}(x^n)$. 
\end{defn}
\begin{lem}  \label{countinglemma}
\cite[Lemma 2.2]{info} The number of different types of sequences in $\textbf{X}^n$ is less than $(n+1)^{\vert{}\textbf{X}\vert{}}$.
\end{lem}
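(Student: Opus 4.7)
The plan is a standard counting argument. A type $P_{x^n}$ is determined entirely by the integer vector of multiplicities $\bigl(N(a\vert x^n)\bigr)_{a\in \textbf{X}}$, since $P_{x^n}(a) = N(a\vert x^n)/n$. So it suffices to bound the number of distinct such multiplicity vectors that can arise from some $x^n \in \textbf{X}^n$.

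First I would observe that for every $a \in \textbf{X}$ the count $N(a\vert x^n)$ is an integer in $\{0,1,\dots,n\}$, giving at most $n+1$ possible values per coordinate, and hence at most $(n+1)^{|\textbf{X}|}$ possible multiplicity vectors in total. To turn the weak bound (less than or equal) into the strict inequality stated in the lemma, I would then use the constraint $\sum_{a\in \textbf{X}} N(a\vert x^n) = n$: this means at least one combination in the Cartesian product $\{0,\dots,n\}^{|\textbf{X}|}$ is excluded (for instance, the all-$n$ vector when $|\textbf{X}|\geq 2$, and trivially for $|\textbf{X}|=1$ the count is forced so there is only $1 < n+1$ type). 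Hence the number of achievable multiplicity vectors, and therefore the number of types, is strictly less than $(n+1)^{|\textbf{X}|}$.

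There is no real obstacle here; the only minor care is to verify the strict inequality (vs. the trivial $\leq$) by exhibiting a forbidden vector, and to handle the degenerate case $|\textbf{X}|=1$ separately where the only type is the one assigning mass $1$ to the single symbol. The bound is of course loose but suffices because it is polynomial in $n$, which is the only property used downstream in Section~\ref{sec:max-info} when splitting probabilities over types.
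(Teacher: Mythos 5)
Your argument is correct and is essentially the standard proof behind the cited result (\cite[Lemma 2.2]{info}, which the paper quotes without reproving): a type is determined by the multiplicity vector $(N(a\vert x^n))_{a\in\textbf{X}}$ with each entry in $\{0,\dots,n\}$, giving at most $(n+1)^{\vert\textbf{X}\vert}$ possibilities. Your extra care about strictness via the constraint $\sum_a N(a\vert x^n)=n$ is a harmless refinement of the same counting argument, so there is nothing further to add.
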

\begin{lem} \label{deltalemma}
\cite[Lemma 2.10]{info} If $x^n\in{T^n_{[X]_\delta}}$ 
and $y^n\in{T^n_{[Y\vert{}X]_{\delta'}}(x^n)}$ 
then $y^n\in{T^n_{[Y]_{\delta''}}}$ 
for $\delta'':=(\delta+\delta')\vert{}\textbf{X}\vert{}$.
\end{lem}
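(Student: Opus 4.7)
The plan is to bound $|N(b\mid y^n)/n - P_Y(b)|$ uniformly in $b\in\mathbf{Y}$ by writing it as the sum of two marginalization errors, one coming from the conditional typicality of $y^n$ given $x^n$ and one from the unconditional typicality of $x^n$, and controlling each via the triangle inequality.

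First I would observe the identities $N(b\mid y^n)=\sum_{a\in\mathbf{X}}N(a,b\mid x^n,y^n)$ and $P_Y(b)=\sum_{a\in\mathbf{X}}P_X(a)W(b\mid a)$, where $W=P_{Y\mid X}$. Writing
\begin{equation*}
\frac{N(b\mid y^n)}{n}-P_Y(b)=\sum_{a}\!\left(\frac{N(a,b\mid x^n,y^n)}{n}-\frac{N(a\mid x^n)}{n}W(b\mid a)\right)+\sum_{a}\!\left(\frac{N(a\mid x^n)}{n}-P_X(a)\right)\!W(b\mid a),
\end{equation*}
the triangle inequality reduces the task to bounding each summand. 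Definition~\ref{conditionaltype} gives the first bracket $\le\delta'$ termwise, so after summing over $a\in\mathbf{X}$ it contributes at most $|\mathbf{X}|\delta'$. For the second bracket, Definition~\ref{typical} yields $|N(a\mid x^n)/n-P_X(a)|\le\delta$, and using $W(b\mid a)\le 1$ in each term then gives at most $|\mathbf{X}|\delta$. Combining produces $|N(b\mid y^n)/n-P_Y(b)|\le(\delta+\delta')|\mathbf{X}|$, which is the advertised constant $\delta''$.

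It remains to verify the support condition: no $b\in\mathbf{Y}$ with $P_Y(b)=0$ appears in $y^n$. If $P_Y(b)=0$, then for every $a$ we have $P_X(a)W(b\mid a)=0$. For any $a$, either $P_X(a)=0$, in which case $x^n\in T^n_{[X]_\delta}$ forces $N(a\mid x^n)=0$ and hence $N(a,b\mid x^n,y^n)=0$, or $W(b\mid a)=0$, in which case $y^n\in T^n_{[Y\mid X]_{\delta'}}(x^n)$ forces $N(a,b\mid x^n,y^n)=0$ directly. Summing over $a$ yields $N(b\mid y^n)=0$, so $b$ does not appear in $y^n$, completing the membership $y^n\in T^n_{[Y]_{\delta''}}$.

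There is no real obstacle here; the only delicate point is bookkeeping the constants correctly, specifically using $W(b\mid a)\le 1$ (rather than a stochastic-matrix row-sum identity) so that the two error terms each scale as $|\mathbf{X}|$ times the respective tolerance, yielding the stated $\delta''=(\delta+\delta')|\mathbf{X}|$ without a hidden factor of $|\mathbf{Y}|$.
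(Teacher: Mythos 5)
Your proof is correct and is essentially the standard argument for this lemma (which the paper cites from Csisz\'ar--K\"orner rather than reproving): marginalize $N(b\mid y^n)$ over $a$, split the deviation into the conditional-typicality and type-approximation errors via the triangle inequality with $W(b\mid a)\le 1$, and check the support condition separately. Nothing further is needed.
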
 
\begin{lem} \label{totalprobabilitylemma}
\cite[Lemma 2.12]{info} For every stochastic matrix $W:\textbf{X}\rightarrow{\textbf{Y}}$
$$W^n(T^n_{[W]\delta}(x^n)\vert{}x^n)\geq{1-2\vert{}\textbf{X}\vert{}\vert{}\textbf{Y}\vert{}e^{-2n\delta^2}}$$
for every $\delta>0$.
\end{lem}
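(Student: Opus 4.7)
The plan is to bound the probability of atypicality $1-W^n(T^n_{[W]_\delta}(x^n)\mid x^n)$ by a union bound over all pairs $(a,b)\in\textbf{X}\times\textbf{Y}$ combined with Hoeffding's inequality applied to the independent Bernoulli trials that make up each joint count $N(a,b\mid x^n,Y^n)$.

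First I would fix $x^n$ and observe that, under $W^n(\cdot\mid x^n)$, the coordinates $Y_1,\dots,Y_n$ are independent with $Y_i\sim W(\cdot\mid x_i)$. For each $(a,b)\in\textbf{X}\times\textbf{Y}$, the count
\begin{equation*}
N(a,b\mid x^n,Y^n)=\sum_{i:\,x_i=a}\mathbbm{1}\{Y_i=b\}
\end{equation*}
is a sum of $m:=N(a\mid x^n)$ independent Bernoulli random variables with mean $W(b\mid a)$. When $m=0$ or $W(b\mid a)=0$ this count is identically zero, so the typicality constraint for that pair (including the side condition $N(a,b\mid x^n,y^n)=0$ whenever $W(b\mid a)=0$ in Definition~\ref{conditionaltype}) holds almost surely without further argument.

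Next I would apply Hoeffding's inequality to this Bernoulli average to obtain, for any $t>0$,
\begin{equation*}
\Pr\!\left[\Bigl|\tfrac{1}{m}N(a,b\mid x^n,Y^n)-W(b\mid a)\Bigr|>t\right]\leq 2e^{-2mt^2},
\end{equation*}
and then choose $t=n\delta/m$ so that the inequality rescales into the form required by Definition~\ref{conditionaltype}, namely
\begin{equation*}
\Bigl|\tfrac{1}{n}N(a,b\mid x^n,Y^n)-\tfrac{1}{n}N(a\mid x^n)\,W(b\mid a)\Bigr|>\delta.
\end{equation*}
With this choice the Hoeffding exponent becomes $-2n^2\delta^2/m\leq -2n\delta^2$ since $m\leq n$, yielding a uniform per-pair bound of $2e^{-2n\delta^2}$. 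Summing over at most $|\textbf{X}||\textbf{Y}|$ pairs via a union bound then produces the claimed $2|\textbf{X}||\textbf{Y}|e^{-2n\delta^2}$.

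The only mildly delicate step is the rescaling: the accuracy requirement is stated on the normalized count $\tfrac{1}{n}N(a,b\mid x^n,Y^n)$ rather than on the empirical conditional frequency $\tfrac{1}{m}N(a,b\mid x^n,Y^n)$, and it is precisely the choice $t=n\delta/m$ that makes the $m$-dependence cancel against Hoeffding's $2mt^2$ exponent, leaving a bound that does not depend on the composition of $x^n$. Everything else is a standard concentration and union-bound routine.
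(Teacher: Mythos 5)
Your proof is correct: the paper states this lemma without proof (citing Csisz\'ar--K\"orner, Lemma 2.12), and your argument --- Hoeffding's inequality applied to each count $N(a,b\mid x^n,Y^n)$ with the rescaling $t=n\delta/m$ so that the exponent $2mt^2=2n^2\delta^2/m\geq 2n\delta^2$, followed by a union bound over the at most $\vert\textbf{X}\vert\vert\textbf{Y}\vert$ pairs --- is exactly the standard proof in that reference. The handling of the degenerate cases ($m=0$ or $W(b\mid a)=0$, where the constraint holds almost surely) is also correct.
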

\begin{lem}  \label{nsizeofsetlemma}
For every distribution $P$ on $\textbf{X}$
$$\log{}\vert{}T^n_{[P]_\delta}\vert{}\leq{}nH(P)-n\delta\vert{}\textbf{X}\vert{}\log{\min_{a\in{supp(P(a))}}P(a)}$$
for every $\delta>0$. 
\end{lem}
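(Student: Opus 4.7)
The plan is to exploit a standard duality between the cardinality of the typical set and the per-sequence probability under $P^n$: if every element of $T^n_{[P]_\delta}$ has probability at least $p_{\min}$, then since probabilities sum to at most $1$, the size of the set is at most $1/p_{\min}$. So the task reduces to a uniform lower bound on $P^n(x^n)$ for $x^n$ typical.

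First I would write, for any $x^n \in T^n_{[P]_\delta}$,
\begin{equation*}
-\log P^n(x^n) = -\sum_{a \in \textbf{X}} N(a\vert{}x^n) \log P(a)= -\sum_{a\in\operatorname{supp}(P)} N(a\vert{}x^n) \log P(a),
\end{equation*}
where the restriction to $\operatorname{supp}(P)$ uses the last clause of Definition~\ref{typical} (no $a$ with $P(a)=0$ occurs in a typical $x^n$). For $a \in \operatorname{supp}(P)$ the factor $-\log P(a)$ is non-negative, so the typicality bound $N(a\vert{}x^n)/n \le P(a)+\delta$ may be applied termwise to obtain
\begin{equation*}
-\log P^n(x^n) \le -n\sum_{a\in\operatorname{supp}(P)}\bigl(P(a)+\delta\bigr)\log P(a) = nH(P) - n\delta \sum_{a\in\operatorname{supp}(P)} \log P(a).
\end{equation*}

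Next I would bound the leftover sum by its worst term: since $-\log P(a) \le -\log \min_{a\in\operatorname{supp}(P)} P(a)$ for every $a\in\operatorname{supp}(P)$, and $\vert{}\operatorname{supp}(P)\vert{} \le \vert{}\textbf{X}\vert{}$, one gets
\begin{equation*}
-\log P^n(x^n) \le nH(P) - n\delta\vert{}\textbf{X}\vert{}\log\min_{a\in\operatorname{supp}(P)} P(a),
\end{equation*}
which is a uniform lower bound on $P^n(x^n)$ over $x^n \in T^n_{[P]_\delta}$. Exponentiating, summing over the typical set, and using $P^n(T^n_{[P]_\delta}) \le 1$ then yields the claimed cardinality bound after taking $\log$.

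I do not expect any serious obstacle. The two things that require a moment of care are (i) the sign bookkeeping, since $-\log P(a) \ge 0$ is what lets the one-sided typicality bound $N(a\vert{}x^n)/n \le P(a)+\delta$ go through without an absolute value, and (ii) restricting the sum to $\operatorname{supp}(P)$, which is justified by the extra typicality condition that forbids zero-probability symbols from appearing. Both are routine.
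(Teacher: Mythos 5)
Your proof is correct and follows essentially the same route as the paper: lower-bound $P^n(x^n)$ uniformly over the typical set via the one-sided typicality bound $N(a\vert x^n)\le n(P(a)+\delta)$ (restricted to $\operatorname{supp}(P)$), worst-case the residual term by $\vert\textbf{X}\vert\log\min_{a}P(a)$, and conclude from $P^n(T^n_{[P]_\delta})\le 1$. The only difference is cosmetic: you work in the log domain, while the paper writes the same estimate as a product of powers.
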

\begin{proof}
Let $x^n\in{T^n_{[P]_\delta}}$. Then
\begin{align}
P^n(x^n)&=\prod_{a\in{supp(P(a))}}P(a)^{N(a\vert{}x^n)} \nonumber \\
                        &\geq{\prod_{a\in{supp(P(a))}}P(a)^{n(\delta+P(a))}} \nonumber \\
                        &=2^{\sum_{a\in{supp(P(a))}}n(\delta+P(a))\log{P(a)}} \nonumber \\
                        &\geq{2^{-nH(P)+n\delta\vert{}\textbf{X}\vert{}\log{\min_{a\in{supp(P(a))}}P(a)}}}. \nonumber
\end{align}
Since $\sum_{x^n\in{T^n_{[P]_\delta}}}P^n(x^n)\leq{}1$, $\vert{}T^n_{[P]_\delta}\vert{}\min_{x^n\in{T^n_{[P]_\delta}}}P^n(x^n)\leq1$ so that
$$\log{}\vert{}T^n_{[P]_\delta}\vert{}\leq{}nH(P)-n\delta\vert{}\textbf{X}\vert{}\log{\min_{a\in{supp(P(a))}}P(a)}.$$
\end{proof} 
\begin{lem} \label{nindividualprobabilitylemma}
Let $W : \textbf{X}\rightarrow{\textbf{Y}}$ be a stochastic matrix,  $x^n\in{\textbf{X}^n}$ and $\delta>0$. Then for all $y^n\in{T^n_{[W]_\delta}(x^n)}$, it holds that
$$\log{W^n(y^n\vert{}x^n)}\leq{-nH(W\vert{}P)+n\delta{}W_c\vert{}\textbf{X}\vert{}\vert{}\textbf{Y}\vert{}},$$
where $P$ is the type of $x^n$ and \\
$W_{c}=-\log{\min_{(a,b)\in{supp(W(b\vert{}a))}}{W(b\vert{}a)}}$.
\end{lem}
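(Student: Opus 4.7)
The plan is to expand $W^n(y^n\vert{}x^n)$ as a product of factors $W(b\vert{}a)^{N(a,b\vert{}x^n,y^n)}$, take logarithms, and then use $W$-typicality to control the resulting counts. The key observation is that $\log W(b\vert{}a) \le 0$ on the support of $W$, so in order to \emph{upper}-bound $\log W^n(y^n\vert{}x^n)$ I need a \emph{lower} bound on each multiplicity $N(a,b\vert{}x^n,y^n)$ — precisely the direction supplied by Definition~\ref{conditionaltype}.

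First I would write
$$\log W^n(y^n\vert{}x^n) = \sum_{(a,b):\,W(b\vert{}a)>0} N(a,b\vert{}x^n,y^n)\log W(b\vert{}a),$$
where the support condition in Definition~\ref{conditionaltype} lets me drop the pairs with $W(b\vert{}a)=0$. Applying the $W$-typicality inequality gives
$$N(a,b\vert{}x^n,y^n) \ge N(a\vert{}x^n)W(b\vert{}a) - n\delta = nP(a)W(b\vert{}a) - n\delta,$$
and multiplying through by $\log W(b\vert{}a)\le 0$ reverses the inequality to yield
$$N(a,b\vert{}x^n,y^n)\log W(b\vert{}a) \le nP(a)W(b\vert{}a)\log W(b\vert{}a) - n\delta \log W(b\vert{}a).$$

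Next I would sum over $(a,b)$ in the support of $W$. The first block collapses to $-nH(W\vert{}P)$ directly from the definition of the conditional entropy. For the second block, each factor $-\log W(b\vert{}a)$ is at most $W_c$ on the support, and there are at most $\vert{}\textbf{X}\vert{}\vert{}\textbf{Y}\vert{}$ such pairs, so its contribution is bounded by $n\delta W_c\vert{}\textbf{X}\vert{}\vert{}\textbf{Y}\vert{}$. Adding the two contributions yields the claimed bound.

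The argument closely mirrors the proof of Lemma~\ref{nsizeofsetlemma}, with the only subtle point being the sign-bookkeeping when passing an inequality through the nonpositive factor $\log W(b\vert{}a)$. I do not anticipate a deeper obstacle, since once the typicality bound is applied in the correct direction the remaining manipulations are mechanical.
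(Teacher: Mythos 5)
Your proposal is correct and follows essentially the same route as the paper's proof: both expand $W^n(y^n\vert{}x^n)$ over pairs $(a,b)$ in the support of $W$, apply the typicality lower bound $N(a,b\vert{}x^n,y^n)\geq nP(a)W(b\vert{}a)-n\delta$ in the direction dictated by the nonpositivity of $\log W(b\vert{}a)$, identify the first block with $-nH(W\vert{}P)$, and bound the residual term by $n\delta W_c\vert{}\textbf{X}\vert{}\vert{}\textbf{Y}\vert{}$. The only cosmetic difference is that you carry out the sign bookkeeping additively on the logarithms, whereas the paper manipulates the product of powers with bases at most one; the content is identical.
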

\begin{proof}
Let $P$ be the type of $x^n$ and $y^n\in{T^n_{[W]_\delta}(x^n)}$. Then 
\begin{align}
&\log{W^n(y^n\vert{}x^n)}=\log{\prod_{(a,b)\in{supp(W(b\vert{}a))}}W(b\vert{}a)^{N(a,b\vert{}x^n,y^n)}} \nonumber \\
                        &\leq{\log{\prod_{(a,b)\in{supp(W(b\vert{}a))}}W(b\vert{}a)^{(-n\delta+N(a\vert{}x^n)W(b\vert{}a))}}} \nonumber \\
                        &=\log{2^{\sum_{(a,b)\in{supp(W(b\vert{}a))}}(-n\delta+nP(a)W(b\vert{}a))\log{W(b\vert{}a)}}} \nonumber \\
                        &\leq{-nH(W\vert{}P)-n\delta\vert{}\textbf{X}\vert{}\vert{}\textbf{Y}\vert{}\log{\min_{(a,b)\in{supp(W(b\vert{}a))}}{W(b\vert{}a)}}}. \nonumber
\end{align}
\end{proof}

\section{Max-Information}
\label{sec:max-info}
\begin{defn} 
\cite[Section 3-B]{tyagi} Let $W : \textbf{X}\rightarrow{\textbf{Y}}$ be a subnormalized channel 
where $\textbf{X}$ and $\textbf{Y}$ are finite sets and $\sum_{y}W(y\vert{}x)\leq{}1$ for every $x\in{\textbf{X}}$. The max-information of $W$ is given by
$$I_{max}(W):=\log{\sum_y\max_{x}w(y\vert{}x)}.$$
\end{defn}
\begin{defn}  \label{epsilonsmoothdefinition}
\cite[Section 3-B]{tyagi} Let $W : \textbf{X}\rightarrow{\textbf{Y}}$ be a channel, where $\textbf{X}$ and $\textbf{Y}$ are finite sets. For a subset $\textbf{T}$ of $\textbf{X}\times{}\textbf{Y}$, denote by $W_{\textbf{T}}$ the subnormalized channel where
$$W_{\textbf{T}}(y\vert{}x)=W(y\vert{}x)\mathbbm{1}{((x,y)\in{\textbf{T}})}.$$
The $\epsilon$-smooth max-information of $W$ is given by
$$I_{max}^{\epsilon}(W):=\inf_{\textbf{T}}I_{max}(W_{\textbf{T}})$$
over all sets $\textbf{T}\subset{\textbf{X}\times{}\textbf{Y}}$ such that
$$W(\lbrace{}y:(x,y)\in{\textbf{T}}\rbrace\vert{}x)\geq{}1-\epsilon$$
for all $x\in{\textbf{X}}$.
\end{defn}
\begin{thm}  \label{maxinfotheorem}
Let $W: \textbf{X}\rightarrow{\textbf{Z}}$ be a DMC with finite input and output alphabets $\textbf{X}$ and $\textbf{Z}$, respectively, and $e_0: \textbf{V}\rightarrow\textbf{X}^n$ be an encoder. Then, denoting $\epsilon_n=2\vert{}\textbf{X}\vert{}\vert{}\textbf{Z}\vert{}e^{-2n\delta^2}$, for the combined channel $W_{e_0}=W^n\circ{}e_0$ it holds that
\begin{align*}
I_{max}^{\epsilon_n}(W_{e_0})\leq{} &n\max_{P_X}I(X\wedge{}Z)+2n\delta{}W_c\vert{}\textbf{X}\vert{}\vert{}\textbf{Z}\vert{}\\
&+\vert{}\textbf{X}\vert{}\log{(n+1)}, 
\end{align*}
for every $\delta>0$ where \\
$W_{c}=-\log{\min_{(a,b)\in{supp(W(b\vert{}a))}}{W(b\vert{}a)}}$.
\end{thm}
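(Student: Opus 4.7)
The plan is to exhibit an explicit admissible subset $\textbf{T}\subseteq \textbf{V}\times\textbf{Z}^n$ for Definition~\ref{epsilonsmoothdefinition} and then estimate the resulting $I_{max}(W_{e_0,\textbf{T}})$ by combining the pointwise bound of Lemma~\ref{nindividualprobabilitylemma} with the counting bounds of Lemmas~\ref{deltalemma}, \ref{nsizeofsetlemma} and \ref{countinglemma}. The natural choice is $\textbf{T}:=\{(v,z^n):z^n\in T^n_{[W]_\delta}(e_0(v))\}$. Applying Lemma~\ref{totalprobabilitylemma} with $x^n=e_0(v)$ shows that for every $v$ the $v$-section of $\textbf{T}$ carries $W^n$-mass at least $1-2|\textbf{X}||\textbf{Z}|e^{-2n\delta^2}=1-\epsilon_n$, so $\textbf{T}$ is admissible and $I_{max}^{\epsilon_n}(W_{e_0})\leq I_{max}(W_{e_0,\textbf{T}})=\log\sum_{z^n}\max_v W^n_{\textbf{T}}(z^n\mid v)$, where $W^n_{\textbf{T}}(z^n\mid v)=W^n(z^n\mid e_0(v))\,\mathbbm{1}\{z^n\in T^n_{[W]_\delta}(e_0(v))\}$.

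Next I would partition the $z^n$-sum by the type of the encoder output attaining the inner maximum. For each $z^n$ with a positive contribution pick a maximizer $v^\star=v^\star(z^n)$ and let $P_\star$ denote the type of $e_0(v^\star)$. Lemma~\ref{nindividualprobabilitylemma} applied to $e_0(v^\star)$ upper-bounds the inner maximum by $2^{-nH(W\mid P_\star)+n\delta W_c|\textbf{X}||\textbf{Z}|}$, and Lemma~\ref{deltalemma} (using that $e_0(v^\star)\in T^n_{[P_\star]_0}$ trivially) places $z^n$ in $T^n_{[P_\star W]_{\delta|\textbf{X}|}}$. Grouping over the at most $(n+1)^{|\textbf{X}|}$ input types given by Lemma~\ref{countinglemma} yields
\[\sum_{z^n}\max_v W^n_{\textbf{T}}(z^n\mid v)\leq \sum_{P}|T^n_{[PW]_{\delta|\textbf{X}|}}|\cdot 2^{-nH(W\mid P)+n\delta W_c|\textbf{X}||\textbf{Z}|}.\]

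To finish, I would bound the output typical set by running Lemma~\ref{nsizeofsetlemma} on the distribution $PW$ on $\textbf{Z}$ with parameter $\delta|\textbf{X}|$, then collapse the resulting $-\log\min_z(PW)(z)$ to $W_c$, giving $|T^n_{[PW]_{\delta|\textbf{X}|}}|\leq 2^{nH(PW)+n\delta W_c|\textbf{X}||\textbf{Z}|}$. Substituting back and recalling $H(PW)-H(W\mid P)=I(P;W)$ produces
\[\sum_{z^n}\max_v W^n_{\textbf{T}}(z^n\mid v)\leq(n+1)^{|\textbf{X}|}\cdot 2^{n\max_{P_X}I(X\wedge Z)+2n\delta W_c|\textbf{X}||\textbf{Z}|},\]
and the stated inequality follows after taking logarithms.

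The main obstacle is the last substitution. Lemma~\ref{nsizeofsetlemma} outputs $-\log\min_{z\in\mathrm{supp}(PW)}(PW)(z)$, which for a general empirical type $P$ is only bounded by $\log n+W_c$, since $(PW)(z)\geq\min_{a\in\mathrm{supp}(P)}P(a)\cdot\min_{b:W(b\mid a)>0}W(b\mid a)\geq e^{-W_c}/n$. A literal substitution would therefore inject a spurious $n\delta|\textbf{X}||\textbf{Z}|\log n$ term. To match the stated $2n\delta W_c|\textbf{X}||\textbf{Z}|$ exactly, I would replace that step by the conditional-type size bound $|T^n_{[W]_\delta}(x^n)|\leq 2^{nH(W\mid P)+n\delta W_c|\textbf{X}||\textbf{Z}|}$ — itself the ``dual'' of Lemma~\ref{nindividualprobabilitylemma} together with $\sum_{z^n}W^n(z^n\mid x^n)\leq 1$ — and argue that the fibre $\{z^n:P_\star(z^n)=P\}$ can be covered by $T^n_{[W]_\delta}(x^n)$ for a single representative $x^n\in T^n_P$. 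Matching the constants in the theorem precisely is the delicate bookkeeping point of the proof.
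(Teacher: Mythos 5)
Your overall route is the paper's route: the same admissible set $\textbf{T}$ built from conditional typicality, Lemma~\ref{totalprobabilitylemma} to verify the $\epsilon_n$-condition, the pointwise bound of Lemma~\ref{nindividualprobabilitylemma}, the inclusion $z^n\in T^n_{[PW]_{\delta|\textbf{X}|}}$ via Lemma~\ref{deltalemma}, and the $(n+1)^{|\textbf{X}|}$ factor from Lemma~\ref{countinglemma}. Up to your displayed sum over types this coincides, modulo notation (the paper works with the projection $\Phi[T^n_{(P,W)_\delta}]$ rather than your fibres), with the paper's proof. The paper then finishes in one line: by Lemma~\ref{nsizeofsetlemma} applied to $PW$, the leftover term is $-n\delta|\textbf{X}||\textbf{Z}|\log\min_{b\in supp(PW)}PW(b)$, and since $PW(b)=\sum_aP(a)W(b|a)\ge\min_aW(b|a)$ this is absorbed into $n\delta W_c|\textbf{X}||\textbf{Z}|$ whenever that minimum is positive. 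So the ``obstacle'' you flag only bites when $W$ has zero entries (then indeed one can only guarantee $PW(b)\ge e^{-W_c}/n$ and an extra $n\delta|\textbf{X}||\textbf{Z}|\log n$ of the kind discussed in Remarks~\ref{csiszarbound} and~\ref{secondorderremark} would appear); for full-support $W$ your main chain already closes the proof exactly as the paper does, and no new idea is needed.

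The genuine gap is in your proposed repair. The fibre $\{z^n:P_\star(z^n)=P\}$ cannot be covered by $T^n_{[W]_\delta}(x^n)$ for a single representative $x^n\in T^n_P$: conditional typicality constrains the joint empirical distribution with the \emph{particular} sequence $x^n$, so two codewords of the same type have conditional typical sets that are in general (nearly) disjoint. For the identity channel $T^n_{[W]_\delta}(x^n)=\{x^n\}$, while the fibre contains every codeword of type $P$; for a low-noise DMC the sets around distinct equal-type codewords are essentially disjoint, and the union over the up to exponentially many codewords in $C_P$ has size of order $2^{nH(PW)}$, not $2^{nH(W|P)}$. Worse, if your covering claim were true, the type-$P$ contribution would collapse to $|T^n_{[W]_\delta}(x^n)|\cdot 2^{-nH(W|P)+n\delta W_c|\textbf{X}||\textbf{Z}|}\le 2^{2n\delta W_c|\textbf{X}||\textbf{Z}|}$, and the theorem would follow \emph{without} the $n\max_{P_X}I(X\wedge Z)$ term --- e.g.\ for the noiseless channel ($W_c=0$) and an injective encoder this would assert $I^{\epsilon_n}_{max}(W_{e_0})\le|\textbf{X}|\log(n+1)$, which is false. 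So the last step must bound the union through its projection into the output typical set and control $\min_b PW(b)$ (the paper's way, with the positivity caveat above), not through a single conditional typical set.
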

\begin{proof}
Let $P$ be a type in $\textbf{X}^n$ as in Definition \ref{type}. Denote by $T^n_{(W)_\delta}$ the set of sequences $(x^n,z^n)$ such that $z^n$ is $W$-typical given $x^n$ with $\delta$ as in Definition \ref{conditionaltype}, by $T^n_{(P,W)_\delta}$ the set of sequences $(x^n,z^n)\in{T^n_{(W)_\delta}}$ such that $x^n$ has type $P$, and by $\Phi[T^n_{(P,W)_\delta}]$ the projection of $T^n_{(P,W)_\delta}$ on $\textbf{Z}^n$. Then by Lemma \ref{nindividualprobabilitylemma} for all $(x^n,z^n)\in{T^n_{(P,W)_\delta}}$
\begin{equation} \label{indiveq}
\log{W^n(z^n \vert{x^n})}\leq{-nH(W\vert{}P)+n\delta{}W_c\vert{}\textbf{X}\vert{}\vert{}\textbf{Z}\vert{}}.
\end{equation}
By Lemma \ref{deltalemma} $\Phi[T^n_{(P,W)_\delta}]\subset{T^n_{[PW]_{\delta\vert{}\textbf{X}\vert{}}}}$ where $PW$ denotes the output distribution for channel $W$ when the input distribution is $P$. Then by Lemma \ref{nsizeofsetlemma} $\log{\vert{\Phi[T^n_{(P,W)_\delta}]}}\vert{}\leq{}nH(PW)-n\delta\vert{}\textbf{X}\vert{}\vert{}\textbf{Z}\vert{}\log{\min_{b\in{supp(PW(b))}}PW(b)}.$ Since $PW(b)=\sum_aP(a)W(b\vert{}a)\geq{}\min_aW(b\vert{}a)$, 
\begin{equation} \label{seteq}
\log{\vert{\Phi[T^n_{(P,W)_\delta}]}}\vert{}\leq{}nH(PW)+n\delta{}W_c\vert{}\textbf{X}\vert{}\vert{}\textbf{Z}\vert{}. 
\end{equation}
Furthermore by Lemma \ref{totalprobabilitylemma} for every $v\in{\textbf{V}}$ 
\begin{equation}
\sum_{z^n:(e_0(v),z^n)\in{T^n_{(W)_{\delta}}}}W^n(z^n\vert{e_0(v)})\geq{1-\epsilon_n}, \nonumber
\end{equation}
where $\epsilon_n=2\vert{}\textbf{X}\vert{}\vert{}\textbf{Z}\vert{}e^{-2n\delta^2}$. First, we assume that each codeword $e_0(v)$ is of type $P$ and $P$ is fixed. For simplicity, we omit the $\delta$ in 
$T^n_{(P,W)_\delta}$. Then for the subnormalized channel $W_{e_0,T^n_{(P,W)}}$
\begin{align} \label{fixedtypeeq}
&I_{max}^{\epsilon_n}(W_{e_0})\leq{I_{max}(W_{e_0,T^n_{(P,W)}})}   \nonumber  \\  
                                      &=\log{\sum_{z^n}\max_vW^n(z^n\vert{e_0(v))}\mathbbm{1}{((e_0(v),z^n)\in{T^n_{(P,W)}})}}        \nonumber \\
                                      &\stackrel{(a)}{\leq{}}-nH(W\vert{P})+\log{\sum_{z^n}\max_v\mathbbm{1}{((e_0(v),z^n)\in{T^n_{(P,W)}})}} \nonumber \\ 
                                      &+n\delta{}W_c\vert{}\textbf{X}\vert{}\vert{}\textbf{Z}\vert{} \nonumber   \\      
                                      &\stackrel{(b)}{\leq{}}-nH(W\vert{P})+\log{\vert{\Phi[T^n_{(P,W)}]}}\vert{}+n\delta{}W_c\vert{}\textbf{X}\vert{}\vert{}\textbf{Z}\vert{} \nonumber \\
                                      &\stackrel{(c)}{\leq{}}-nH(W\vert{}P)+nH(PW)+2n\delta{}W_c\vert{}\textbf{X}\vert{}\vert{}\textbf{Z}\vert{} \nonumber \\
                                      &=nI(P;W)+2n\delta{}W_c\vert{}\textbf{X}\vert{}\vert{}\textbf{Z}\vert{},
\end{align}
where $(a)$ comes from (\ref{indiveq}), $(b)$ uses the fact that $(e_0(v),z^n)\in{T^n_{(P,W)}}$ implies $z^n\in{\Phi[T^n_{(P,W)}]}$ so that $\sum_{z^n}\max_v\mathbbm{1}{((e_0(v),z^n)\in{T^n_{(P,W)}})}\leq{}\sum_{z^n}\mathbbm{1}{(z^n\in{\Phi[T^n_{(P,W)}]})}=\vert{}\Phi[T^n_{(P,W)}]\vert{}$, and $(c)$ comes from (\ref{seteq}). This completes the proof where $e_0(v)$ is of a fixed type $P$. Now, we will show proof of an arbitrary code.
Let us denote by $C_{P}$ the set of codewords $e_0(v)$ of type $P$.
Then, as before, 
\begin{align}
&I_{max}^{\epsilon_n}(W_{e_0})\leq{I_{max}(W_{e_0,T^n_{(W)}})}   \nonumber  \\ 
									  &=\log{\sum_{z^n}\max_vW^n(z^n\vert{e_0(v))}\mathbbm{1}{((e_0(v),z^n)\in{T^n_{(W)}})}}        \nonumber \\
									  &\leq{\log{\sum_{P}\sum_{z^n}\max_{x^n\in{C_{P}}}W^n(z^n\vert{x^n)}\mathbbm{1}{((x^n,z^n)\in{T^n_{(W)}})}}} \nonumber \\
									  &\stackrel{(d)}{\leq{}}\log{\max_{P}\sum_{z^n}\max_{x^n\in{C_{P}}}W^n(z^n\vert{x^n)}\mathbbm{1}{((x^n,z^n)\in{T^n_{(W)}})}} \nonumber \\
           &+\vert{}\textbf{X}\vert{}\log{(n+1)} \nonumber \\
	&\leq{n\max_{P_X}I(P_X;W)+2n\delta{}W_c\vert{}\textbf{X}\vert{}\vert{}\textbf{Z}\vert{}+\vert{}\textbf{X}\vert{}\log{(n+1)}} \nonumber,
\end{align}  
where $(d)$ comes from Lemma \ref{countinglemma}. Notice that Theorem \ref{maxinfotheorem} is a generalization of \cite[Lemma 5]{tyagi}.
\end{proof}
\begin{rem}  \label{csiszarbound}
The typicality bounds in~\cite{info} can also be used in the proof of Theorem~\ref{maxinfotheorem}. Instead of Lemma~\ref{nsizeofsetlemma},~\cite[Lemma 2.13]{info} can be used for an upper bound on the size of a typical set. Another upper bound on the channel probability can be derived by using \cite[Lemma 2.6]{info} 
and the arguments in the proof of \cite[Lemma 2.13]{info} where the second term in the upper bound then becomes $-n\vert{}\textbf{X}\vert{}\vert{}\textbf{Y}\vert{}\delta\log{\delta}$. Then 
\begin{align*}
&I_{max}^{\epsilon_n}(W_{e_0})\leq{n\max_{P_X}I(X\wedge{}Z)-n\vert{}\textbf{X}\vert{}\vert{}\textbf{Z}\vert{}\delta\log{}(\delta^2\vert{}\textbf{X}\vert{})} \\
&+(\vert{}\textbf{X}\vert{}+\vert{}\textbf{Z}\vert{})\log{(n+1)}
\end{align*}
for every $0<\delta\leq{\frac{1}{2\vert{}\textbf{X}\vert{}\vert{}\textbf{Z}\vert{}}}$. We won't use this upper bound in our achievability analysis, which will be clear in Section~\ref{sec:Achievability_bounds}.
\end{rem}

\section{Divergence Bounds}
\label{sec:divergence}
\begin{lem}  \label{maxlemma}
Let $W : \textbf{A}\rightarrow{\textbf{E}}$ be a subnormalized channel, where $\textbf{A}$ and $\textbf{E}$ are finite sets. Let $A$ be a random variable uniformly distributed on $\textbf{A}$ and $P_{E\vert{}A}=W$. Then
$$D_{2}(P_{AE}\vert{}\vert{}P_A\times{}P_E)\leq{}I_{max}(W).$$ 
\end{lem}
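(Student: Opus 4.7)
The plan is to unfold both sides explicitly and then apply one elementary inequality. Since $A$ is uniform on $\textbf{A}$ and $P_{E|A}=W$, writing $|\textbf{A}|=N$ we have $P_{AE}(a,e)=W(e|a)/N$ and $P_E(e)=\frac{1}{N}\sum_{a'}W(e|a')$. Substituting into the defining expression $D_2(P_{AE}\|P_A\times P_E)=\log\sum_{a,e}\frac{P_{AE}(a,e)^2}{P_A(a)P_E(e)}$, the factors of $N$ cancel and the sum collapses to
\begin{equation*}
\sum_{a,e}\frac{P_{AE}(a,e)^2}{P_A(a)P_E(e)}=\sum_{e}\frac{\sum_a W(e|a)^2}{\sum_{a'} W(e|a')}.
\end{equation*}
(If $\sum_{a'}W(e|a')=0$ for some $e$ then all $W(e|a)=0$ too, and the corresponding term is simply zero, so no division-by-zero issue arises.)

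Next I would apply the crude bound $W(e|a)^2\le \bigl(\max_{a''}W(e|a'')\bigr)\,W(e|a)$ coordinatewise and sum over $a$, giving $\sum_a W(e|a)^2\le \bigl(\max_{a''}W(e|a'')\bigr)\sum_{a'}W(e|a')$. Dividing by $\sum_{a'}W(e|a')$ (on the support) and summing over $e$ yields
\begin{equation*}
\sum_{a,e}\frac{P_{AE}(a,e)^2}{P_A(a)P_E(e)}\le \sum_{e}\max_{a}W(e|a)=2^{I_{max}(W)}.
\end{equation*}
Taking $\log$ gives $D_2(P_{AE}\|P_A\times P_E)\le I_{max}(W)$, which is the claim.

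There is essentially no real obstacle here beyond bookkeeping: the argument is one line of algebra plus the inequality $x^2\le x\cdot\max x$ applied pointwise in $a$. The only mild subtlety is that $W$ is only subnormalized, so one must be careful that $P_E(e)$ can be zero and handle those $e$ by convention (they contribute nothing to either side); and that $I_{max}(W)$ itself is defined with the same max, so the right-hand side is exactly $\log$ of the bound produced in the second step.
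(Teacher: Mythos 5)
Your proof is correct and follows essentially the same route as the paper's: expand $D_2$, use uniformity of $P_A$ to reduce the sum to $\sum_e \frac{\sum_a W(e|a)^2}{\sum_{a'}W(e|a')}$, and bound it via $W(e|a)^2\le\bigl(\max_{a''}W(e|a'')\bigr)W(e|a)$. Your explicit handling of the zero-denominator case (and the subnormalization remark) is a small added care the paper leaves implicit; otherwise the arguments coincide.
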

\begin{proof}
\begin{align}
D_{2}(P_{AE}\vert{}\vert{}P_A\times{}P_E)&=\log\sum_{a,e}\frac{(P_{AE})^2}{P_AP_E} \nonumber \\
&=\log\sum_{a,e}\frac{P_A^2(a)W^2(e\vert{}a)}{P_A(a)\sum_{a'}P_A(a')W(e\vert{}a')} \nonumber \\
						 &=\log\sum_e\frac{\sum_aW^2(e\vert{}a)}{\sum_{a'}W(e\vert{}a')} \nonumber \\
						 &{\leq{}}\log\sum_e\max_aW(e\vert{}a)=I_{max}(W)  \nonumber 
\end{align}
\end{proof}
\begin{lem} \label{D1lemma}
Let $W : \textbf{A}\rightarrow{\textbf{E}}$ be a channel, where $\textbf{A}$ and $\textbf{E}$ are finite sets. Let $A$ be a random variable uniformly distributed on $\textbf{A}$ and $P_{E\vert{}A}=W$. Then
$$D(P_{AE}\vert{}\vert{}P_A\times{}P_E)\leq{}I^\epsilon_{max}(W)-(1-\epsilon)\log(1-\epsilon)+\epsilon\log\vert{}\textbf{A}\vert{}$$
for every $\epsilon<1-e^{-1}$.
\end{lem}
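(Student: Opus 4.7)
The plan is to split $D(P_{AE}\|P_A\times P_E)$ at a near-optimal smoothing set $\mathcal{T}\subseteq\textbf{A}\times\textbf{E}$ from the definition of $I^\epsilon_{max}(W)$, apply Lemma~\ref{maxlemma} to the subnormalized truncated channel $W_\mathcal{T}$ on the piece inside $\mathcal{T}$, and absorb the leftover on $\mathcal{T}^c$ into the $\epsilon\log|\textbf{A}|$ term. Taking the infimum over admissible $\mathcal{T}$ at the end turns $I_{max}(W_\mathcal{T})$ into $I^\epsilon_{max}(W)$.

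Concretely, I would fix any $\mathcal{T}$ with $W(\{e:(a,e)\in\mathcal{T}\}\mid a)\geq 1-\epsilon$ for every $a$, and introduce the subnormalized joint $\bar{P}_{AE}(a,e):=P_A(a)W_\mathcal{T}(e|a)$ with total mass $\bar{p}\geq 1-\epsilon$ (using uniform $A$), and its second marginal $\bar{P}_E(e):=\sum_a\bar{P}_{AE}(a,e)\leq P_E(e)$. On the $\mathcal{T}^c$-piece, uniformity gives $P_E(e)\geq P_A(a)W(e|a)=W(e|a)/|\textbf{A}|$, hence $\log(W(e|a)/P_E(e))\leq\log|\textbf{A}|$, so the $\mathcal{T}^c$-contribution is at most $(1-\bar{p})\log|\textbf{A}|\leq\epsilon\log|\textbf{A}|$. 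On the $\mathcal{T}$-piece, replacing $P_E(e)$ by the smaller $\bar{P}_E(e)$ in the denominator upper-bounds the contribution by the subnormalized divergence $D(\bar{P}_{AE}\|P_A\times\bar{P}_E)$.

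Next I would pass from $D_1$ to $D_2$ in the subnormalized setting. Renormalizing via $\tilde{P}_{AE}:=\bar{P}_{AE}/\bar{p}$ and $\tilde{P}_E:=\bar{P}_E/\bar{p}$, the identity $D(\bar{P}_{AE}\|P_A\times\bar{P}_E)=\bar{p}\,D(\tilde{P}_{AE}\|P_A\times\tilde{P}_E)$, combined with the standard $D_1\leq D_2$ for the tilde measures and the direct computation $D_2(\tilde{P}_{AE}\|P_A\times\tilde{P}_E)=D_2(\bar{P}_{AE}\|P_A\times\bar{P}_E)-\log\bar{p}$, gives
\[
D(\bar{P}_{AE}\|P_A\times\bar{P}_E)\leq \bar{p}\,D_2(\bar{P}_{AE}\|P_A\times\bar{P}_E)-\bar{p}\log\bar{p}.
\]
The proof of Lemma~\ref{maxlemma} goes through verbatim for the subnormalized channel $W_\mathcal{T}$ and bounds the $D_2$ term by $I_{max}(W_\mathcal{T})$. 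The hypothesis $\epsilon<1-e^{-1}$ forces $\bar{p}\geq 1-\epsilon>1/e$ into the decreasing region of $x\mapsto-x\log x$, yielding $-\bar{p}\log\bar{p}\leq-(1-\epsilon)\log(1-\epsilon)$, while $\bar{p}\,I_{max}(W_\mathcal{T})\leq I_{max}(W_\mathcal{T})$ in the regime $I_{max}(W_\mathcal{T})\geq 0$ of interest. Assembling and taking the infimum over admissible $\mathcal{T}$ completes the argument.

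The main obstacle I anticipate is precisely this Jensen step in the subnormalized setting: the usual $D_1\leq D_2$ needs probability measures, so the renormalization by $\bar{p}$ is what produces the $-(1-\epsilon)\log(1-\epsilon)$ correction, and the threshold $\epsilon<1-e^{-1}$ enters exactly to keep this correction nonnegative via the monotonicity of $-x\log x$ on $[1/e,1]$; outside this regime the whole bookkeeping would collapse.
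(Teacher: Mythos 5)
Your decomposition is essentially the paper's: fix an admissible $\textbf{T}$, bound the $\textbf{T}^c$ contribution by $\epsilon\log\vert\textbf{A}\vert$ using uniformity of $A$, bound the $\textbf{T}$ contribution by an order-$2$ (collision) quantity and then by $I_{max}(W_{\textbf{T}})$, and finally take the infimum over $\textbf{T}$. The one structural difference is that the paper delegates the passage from $\sum_{(a,e)\in\textbf{T}}P_{AE}\log\frac{P_{AE}}{P_AP_E}$ to $\log\sum_{(a,e)\in\textbf{T}}\frac{P_{AE}^2}{P_AP_E}-(1-\epsilon)\log(1-\epsilon)$ to \cite[Lemma 12]{moritz-boche}, whereas you re-derive it by truncating, renormalizing by $\bar p=P_{AE}(\textbf{T})$, and applying Jensen; your renormalization identities, the monotonicity argument giving $-\bar p\log\bar p\le-(1-\epsilon)\log(1-\epsilon)$ for $\bar p\ge 1-\epsilon>e^{-1}$, and the use of Lemma~\ref{maxlemma} are all correct (note Lemma~\ref{maxlemma} is already stated for subnormalized channels, so no "verbatim re-proof" is even needed).

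The genuine gap is the step $\bar p\,I_{max}(W_{\textbf{T}})\le I_{max}(W_{\textbf{T}})$, which you justify only by appealing to "the regime $I_{max}(W_{\textbf{T}})\ge 0$ of interest". Nothing in the hypotheses of Lemma~\ref{D1lemma} guarantees this for the admissible sets over which the infimum defining $I^\epsilon_{max}(W)$ is taken: one only knows $\sum_e\max_aW_{\textbf{T}}(e\vert a)\ge 1-\epsilon$, so $I_{max}(W_{\textbf{T}})$ can be as low as $\log(1-\epsilon)<0$ (take $\vert\textbf{A}\vert$ small and $\textbf{T}$ retaining barely $1-\epsilon$ of the mass of each row). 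For such $\textbf{T}$ your chain yields only $\bar p\,I_{max}(W_{\textbf{T}})-\bar p\log\bar p$, which is strictly larger than $I_{max}(W_{\textbf{T}})-(1-\epsilon)\log(1-\epsilon)$ when $I_{max}(W_{\textbf{T}})<0$ and $\bar p<1$, so the claimed bound does not follow for the infimizing $\textbf{T}$. This is exactly the point the paper buries inside the citation of \cite[Lemma 12]{moritz-boche}, and it is not merely cosmetic: in the degenerate case $\vert\textbf{A}\vert=1$, $W$ uniform on two outputs and $\epsilon=\tfrac12$, the left side of Lemma~\ref{D1lemma} equals $0$ while the right side equals $\tfrac12\log\tfrac12<0$, so some condition of the type $\sum_{(a,e)\in\textbf{T}}\frac{P_{AE}^2}{P_AP_E}\ge 1$ (equivalently $I_{max}(W_{\textbf{T}})\ge0$) is genuinely needed somewhere. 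To close your proof you must either add such a hypothesis (it does hold in the regime where the lemma is applied in Theorem~\ref{achievabilitytheorem}, where $I_{max}$ grows linearly in $n$) or check that the cited lemma supplies it; as written, the proposal does not establish the statement in the generality claimed.
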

\begin{proof}
Let $\textbf{T}$ be a subset of $\textbf{A}\times{}\textbf{E}$ satisfying $W(\lbrace{}e:(a,e)\in{\textbf{T}}\rbrace\vert{}a)\geq{}1-\epsilon$ for all $a\in{\textbf{A}}$. Then $D(P_{AE}\vert{}\vert{}P_A\times{}P_E)=\sum_{a,e}P_{AE}\log\frac{P_{AE}}{P_AP_E} =\sum_{(a,e)\in{\textbf{T}}}P_{AE}\log\frac{P_{AE}}{P_AP_E}+\sum_{(a,e)\in{\textbf{T}^c}}P_{AE}\log\frac{P_{AE}}{P_AP_E}.$ 
First notice that 
\begin{align*}
\log\frac{P_{AE}}{P_AP_E}&=\log\frac{P_A(a)W(e\vert{}a)}{P_A(a)\sum_{a'}P_A(a')W(e\vert{}a')} \\
						 &=\log\frac{\vert{}\textbf{A}\vert{}W(e\vert{}a)}{\sum_{a'}W(e\vert{}a')} \\
						 &\leq{}\log\frac{\vert{}\textbf{A}\vert{}W(e\vert{}a)}{W(e\vert{}a)}=\log\vert{}\textbf{A}\vert{}.
\end{align*}
Then $\sum_{(a,e)\in{\textbf{T}^c}}P_{AE}\log\frac{P_{AE}}{P_AP_E}\leq{}Pr((A,E)\in{\textbf{T}^c})\log\vert{}\textbf{A}\vert{}\leq\epsilon\log\vert{}\textbf{A}\vert{}.$ Also by \cite[Lemma 12]{moritz-boche} if $\epsilon<1-e^{-1}$, $\sum_{(a,e)\in{\textbf{T}}}P_{AE}\log\frac{P_{AE}}{P_AP_E}\leq\log{\sum_{(a,e)\in{\textbf{T}}}\frac{(P_{AE})^2}{P_AP_E}}-(1-\epsilon)\log(1-\epsilon) \leq{}I_{max}(W_{\textbf{T}})-(1-\epsilon)\log(1-\epsilon)$ where the last inequality comes from Lemma \ref{maxlemma}. So we have $D(P_{AE}\vert{}\vert{}P_A\times{}P_E)\leq{}I_{max}(W_{\textbf{T}})-(1-\epsilon)\log(1-\epsilon)+\epsilon\log\vert{}\textbf{A}\vert{}.$
Since this is true for every set $\textbf{T}$ where $W(\lbrace{}e:(a,e)\in{\textbf{T}}\rbrace\vert{}a)\geq{}1-\epsilon$ for all $a\in{\textbf{A}}$, then it holds also that $D(P_{AE}\vert{}\vert{}P_A\times{}P_E)\leq{}I^\epsilon_{max}(W)-(1-\epsilon)\log(1-\epsilon)+\epsilon\log\vert{}\textbf{A}\vert{}.$
\end{proof}

\section{Universal Hashing}
\label{sec:hashing}
\begin{defn}
A random hash function $f_X$ is a stochastic map from $\textbf{A}$ to $\textbf{M}:=\lbrace1,..., M\rbrace$, where $X$ denotes a random variable describing its stochastic behavior. A random hash function $f_X$ is called an $\epsilon$-almost universal$_2$ hash function if it satisfies the following condition: For any distinct $a_1$, $a_2\in{A}$, 
$$Pr(f_X(a_1)=f_X(a_2))\leq{}\frac{\epsilon}{M}.$$
When $\epsilon=1$, we say that $f_X$ is a universal$_2$ hash function \cite{hayashi-tan}. Let \textbf{X} be the set corresponding to $X$. For every $x\in{\textbf{X}}$, $f_x^{-1} : \textbf{M}\rightarrow{\textbf{A}}$ is defined as follows: $f_x^{-1}(m)$ attains values in the preimage $f_x^{-1}\lbrace{}m\rbrace$ with uniform distribution. Notice that $f_x^{-1}$ is a right inverse, i.e., $f_x(f_x^{-1}(m))=m$. We will use universal$_2$ hash functions 
with the following extra condition as in~\cite{hayashi}: For any $x\in{\textbf{X}}$, the cardinality of $f_x^{-1}\lbrace{}m\rbrace$ does not depend on $m\in{\textbf{M}}$. Notice that the   cardinality of $f_x^{-1}\lbrace{}m\rbrace$ is $\frac{\vert{}\textbf{A}\vert{}}{\vert{}\textbf{M}\vert{}}$ for every $m\in{\textbf{M}}$. We call a universal$_2$ hash function with this extra condition a c-universal$_2$ hash function. 
\end{defn}
\begin{defn}  \label{definitionC}
\cite[Section 2-B]{hayashi-tan} Let $\textbf{A}$ and $\textbf{E}$ be two sets with finite cardinality, let $A$ be a random variable on $\textbf{A}$ and $E$ be a random variable on $\textbf{E}$. Then
\begin{align}
C_{1+s}(A\vert{}E):&=D_{1+s}(P_{AE}\vert{}\vert{}P_{mix,A}\times{}P_E)\nonumber \\
				   &=\log\vert{}\textbf{A}\vert{}-H_{1+s}(A\vert{}E) \nonumber
\end{align}
where $P_{mix,A}$ is the uniform distribution on $\textbf{A}$.
\end{defn}
\begin{lem}  \label{hashlemma}
\cite[Lemma 1]{hayashi-tan} For an $\epsilon$-almost universal$_2$ hash function $f_X : \textbf{A}\rightarrow{\textbf{M}= \lbrace1,...,M\rbrace}$, we have for $s\in[0,1]$,
$$e^{sC_{1+s}(f_X(A)\vert{}EX)}\leq{}\epsilon^s+M^se^{-sH_{1+s}(A\vert{}E)}.$$
\end{lem}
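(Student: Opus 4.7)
The plan is to recast $e^{sC_{1+s}(f_X(A)\vert{}EX)}$ as an average of $s$-th powers of a quantity that decomposes into a ``self'' contribution and a ``collision'' contribution, and then to split that $s$-th power using the subadditivity inequality $(u+v)^s \leq u^s + v^s$, valid for non-negative $u,v$ and $s\in[0,1]$. The collision piece is then handled via Jensen's inequality and the $\epsilon$-almost universal$_2$ property, while the self piece yields the desired $M^s e^{-sH_{1+s}(A\vert{}E)}$ automatically.

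Concretely, define $Q_{e,x,m}:=P_{f_X(A)\vert{}EX}(m\vert{}e,x)=\sum_{a:\,f_x(a)=m}P_{A\vert{}E}(a\vert{}e)$. Using Definition~\ref{definitionC} and the fact that $Q_{e,x,\cdot{}}$ is a probability distribution on $\textbf{M}$, I rewrite
\begin{align*}
&e^{sC_{1+s}(f_X(A)\vert{}EX)} = M^s\sum_{e,x}P_E(e)P_X(x)\sum_m Q_{e,x,m}\cdot Q_{e,x,m}^s \\
&\quad = \sum_{e,x,a}P_E(e)P_X(x)P_{A\vert{}E}(a\vert{}e)\,(M Q_{e,x,f_x(a)})^s,
\end{align*}
by exchanging the sums so that each $m$ is indexed by the $a$'s with $f_x(a)=m$. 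I then split $Q_{e,x,f_x(a)}=P_{A\vert{}E}(a\vert{}e)+D_{e,x,a}$ with $D_{e,x,a}:=\sum_{a'\neq a:\,f_x(a')=f_x(a)}P_{A\vert{}E}(a'\vert{}e)$, and apply $(u+v)^s\leq u^s+v^s$ to obtain
$$(MQ_{e,x,f_x(a)})^s \leq (MP_{A\vert{}E}(a\vert{}e))^s+(MD_{e,x,a})^s.$$
Averaging the first summand over $(e,x,a)$ gives $M^s\sum_e P_E(e)\sum_a P_{A\vert{}E}(a\vert{}e)^{1+s}=M^s e^{-sH_{1+s}(A\vert{}E)}$ by the definition of the conditional R\'enyi entropy.

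For the collision term I would apply Jensen's inequality to the concave map $x\mapsto x^s$ conditionally on $E=e$, then invoke the $\epsilon$-almost universal$_2$ property. The expected collision mass is
$$\sum_{x,a}P_X(x)P_{A\vert{}E}(a\vert{}e)\,D_{e,x,a}=\sum_{a\neq a'}P_{A\vert{}E}(a\vert{}e)P_{A\vert{}E}(a'\vert{}e)Pr(f_X(a)=f_X(a'))\leq\frac{\epsilon}{M},$$
since $Pr(f_X(a)=f_X(a'))\leq\epsilon/M$ for $a\neq a'$ and the product marginals sum to at most $1$. Jensen then upgrades this to $\sum_{x,a}P_X(x)P_{A\vert{}E}(a\vert{}e)\,D_{e,x,a}^s\leq(\epsilon/M)^s$ for every $e$, so the total collision contribution is at most $M^s(\epsilon/M)^s=\epsilon^s$. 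The step to keep honest is the initial re-indexing $\sum_m Q_{e,x,m}^{1+s}=\sum_a P_{A\vert{}E}(a\vert{}e)\,Q_{e,x,f_x(a)}^s$, which makes the sampled $a$ an explicit variable and exposes $P_{A\vert{}E}(a\vert{}e)$ as the natural self-contribution inside $Q_{e,x,f_x(a)}$; once this is in place, subadditivity on $[0,1]$ and the universal$_2$ collision bound combine cleanly, and no separate control of $\sum_a P_{A\vert{}E}(a\vert{}e)^s$ is ever required, which is what defeats the naive direct expansion.
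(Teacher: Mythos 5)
Your argument is correct. The paper itself does not prove this lemma --- it imports it verbatim as \cite[Lemma 1]{hayashi-tan} --- and your derivation is essentially the standard proof from that reference: expand $e^{sC_{1+s}(f_X(A)\vert EX)}=M^s\sum_{e,x}P_E(e)P_X(x)\sum_m Q_{e,x,m}^{1+s}$, re-index so the sampled $a$ is explicit, split off the self term via $(u+v)^s\leq u^s+v^s$ for $s\in[0,1]$, and control the collision term with Jensen plus $\Pr(f_X(a)=f_X(a'))\leq\epsilon/M$. The only point worth making explicit is the standing assumption that the hash seed $X$ is independent of $(A,E)$, which you use silently when writing $P_{EX}=P_E\times P_X$ and $P_{A\vert EX}=P_{A\vert E}$; with that stated, the proof is complete.
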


\section{Achievability Bounds on the Wiretap Channel}
\label{sec:Achievability_bounds}
We use universal$_2$ hash functions for constructing DM-DWTC codes as in \cite{hayashi,tyagi} as follows: We have a wiretap encoder $e : \textbf{M}\rightarrow{\textbf{X}^n}$ where $e(m)=e_0(f_x^{-1}(m))$, a wiretap decoder $d :\textbf{Y}^n\rightarrow{\textbf{M}}$ where $d(y^n)=f_x(d_0(y^n))$, such that
\begin{itemize}
\item $e_0 : \textbf{A}\rightarrow{\textbf{X}^n}$ is a channel encoder,
\item $f_x^{-1} : \textbf{M}\rightarrow{\textbf{A}}$ where $f_X$ is a c-universal$_2$ hash function, 
\item $d_0 : \textbf{Y}^n\rightarrow{\textbf{A}}$ is a channel decoder,
\item $f_x : \textbf{A}\rightarrow{\textbf{M}}$ where realization $x$ is known to both receivers.
\end{itemize}
\begin{conj} \label{entropyconjecture}
Let $V : \textbf{X}\rightarrow{\textbf{Y}}$ and $W : \textbf{Y}\rightarrow{\textbf{E}}$ 
be DMC. Then, for every sequence of channel encoders $e_0[n] : \textbf{A}_n\rightarrow{\textbf{X}^n}$ which achieves the capacity of $V$ as $n\rightarrow{\infty}$, 
\begin{equation} \label{assumption}
\vert{}H(A_n\vert{}E^n)-H_{1+\frac{1}{\sqrt{n}}}(A_n\vert{}E^n)\vert{}\leq{}O(\sqrt{n}).
\end{equation}
\end{conj}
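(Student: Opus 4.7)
The plan is to expand $H_{1+s}(A_n|E^n)$ in $s$ around $s=0$ via the cumulant generating function of the conditional information density $i(A_n,E^n) := -\log P_{A_n|E^n}(A_n|E^n)$. Setting $\varphi(s) := \log \mathrm{E}[e^{-s\,i(A_n,E^n)}] = \log \mathrm{E}[P_{A_n|E^n}(A_n|E^n)^s]$, one has $H_{1+s}(A_n|E^n) = -\varphi(s)/s$, together with $\varphi(0)=0$, $\varphi'(0)=-H(A_n|E^n)$, and $\varphi''(0)=\mathrm{Var}[i(A_n,E^n)]$. Formal Taylor expansion then yields
\begin{align*}
&H(A_n|E^n) - H_{1+s}(A_n|E^n) \\
&= \tfrac{s}{2}\mathrm{Var}[i(A_n,E^n)] + \sum_{k \ge 3}\tfrac{(-1)^k s^{k-1}}{k!}\kappa_k[i(A_n,E^n)],
\end{align*}
where $\kappa_k$ denotes the $k$-th cumulant. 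If every cumulant $\kappa_k[i(A_n,E^n)]$ grows as $O(n)$ (with at most geometric dependence on $k$ to keep the series summable), then $s = 1/\sqrt{n}$ makes the $k$-th term of order $O(n^{(3-k)/2})$, dominated by the $k=2$ term of order $\sqrt{n}$ --- exactly the asserted bound.

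The core subtask is thus to show that the cumulants of $i(A_n, E^n)$ are $O(n)$. Writing $i(A_n,E^n) = -\log P_{A_n}(A_n) - \log P_{E^n|A_n}(E^n|A_n) + \log P_{E^n}(E^n)$ and using that $A_n$ is uniform (so the first summand is the constant $\log|\textbf{A}_n|$), the second summand is handled via the memoryless structure: conditional on $A_n$, $-\log P_{E^n|A_n}(E^n|A_n) = -\sum_{i=1}^n \log(W \circ V)(E_i \mid (e_0(A_n))_i)$ is a sum of $n$ bounded independent random variables with $O(n)$ cumulants, and a law-of-total-cumulants argument --- noting that the $n$ conditional means differ across codewords by only $O(1)$ per coordinate --- lifts the bound to the unconditional setting.

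The principal obstacle is the third summand $\log P_{E^n}(E^n)$: the mixture distribution $P_{E^n}$ carries no direct memoryless structure. To bound its cumulants I would invoke the capacity-achieving hypothesis via the method of types, arguing that typical codewords have empirical types clustered near a capacity-achieving input distribution $P_X^\ast$, so $P_{E^n}$ is close to the i.i.d.\ reference $((W \circ V)P_X^\ast)^{\otimes n}$, whose log-probability has $O(n)$ cumulants. The hard step --- and the reason the statement is only conjectured --- is quantifying this closeness uniformly over arbitrary capacity-achieving sequences, since achieving $V$-capacity does not directly constrain the eavesdropper's output. A more promising route, which I would attempt first, is to bypass $P_{E^n}$ by conditioning on $X^n$: bound $H(A_n|E^n) - H_{1+s}(A_n|E^n)$ by a quantity involving only $H_{1+s}(A_n \mid X^n, E^n)$ plus a correction depending on $H(X^n|E^n)$, restricting the analysis to the memoryless composite channel where the $O(n)$ cumulant bound is immediate.
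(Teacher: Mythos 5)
The statement you set out to prove is not proved in the paper at all: it is stated explicitly as a conjecture (Conjecture~\ref{entropyconjecture}), and Theorem~\ref{achievabilitytheorem} is proved \emph{conditionally} on it. So there is no paper proof to compare against, and the relevant question is whether your proposal closes the gap the authors left open. It does not, and you largely say so yourself. The cumulant-expansion framework is a reasonable start ($H_{1+s}(A_n|E^n)=-\varphi(s)/s$ with $\varphi(s)=\log\mathrm{E}[P_{A_n|E^n}(A_n|E^n)^s]$ is correct, and with $s=1/\sqrt{n}$ an $O(n)$ bound on the relevant cumulants would indeed give the $O(\sqrt{n})$ claim), but every step that would turn this into a proof is missing. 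First, truncating or summing the cumulant series at $s=1/\sqrt{n}$ needs uniform control of the remainder (e.g.\ a bound on $\sup_{t\in[0,s]}\varphi''(t)$, i.e.\ on tilted variances), which you only gesture at via an unverified ``geometric dependence on $k$''. Second, even the variance term is not secured: writing $i(A_n,E^n)=\log|\textbf{A}_n|+\log P_{E^n}(E^n)-\log P_{E^n|A_n}(E^n|A_n)$, the law of total variance gives $\mathrm{Var}[i]=\mathrm{E}[\mathrm{Var}(i|A_n)]+\mathrm{Var}(\mathrm{E}[i|A_n])$, and conditional means that ``differ by $O(1)$ per coordinate'' can differ by $\Theta(n)$ in total across codewords of different type, making the second term $\Theta(n^2)$ unless you prove type concentration for an \emph{arbitrary} first-order capacity-achieving sequence $e_0[n]$ --- which is exactly the uniformity you admit you cannot quantify. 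Third, the term $\log P_{E^n}(E^n)$ is a mixture over the code and carries no product structure; closeness of $P_{E^n}$ to an i.i.d.\ reference is a property of codes good for the eavesdropper's channel, and achieving the capacity of $V$ does not supply it.

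Your proposed bypass --- conditioning on $X^n$ to work with the memoryless composite channel --- does not escape the difficulty. Since $X^n=e_0(A_n)$ is a deterministic function of $A_n$ (and $e_0$ injective for a capacity-achieving code), $H_{1+s}(A_n\mid X^n,E^n)$ is trivially $0$ and all of the difficulty migrates into the ``correction'' $H(X^n|E^n)$, which is the same quantity you started with, again governed by the non-product mixture $P_{E^n}$. In short: the statement remains a conjecture both in the paper and after your proposal; what you have is a plausible program whose decisive steps (uniform-in-code control of the variance and higher cumulants of $\log P_{E^n}(E^n)$, and a rigorous remainder bound at $s=1/\sqrt{n}$) are precisely the open content of the conjecture.
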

\begin{thm}  \label{achievabilitytheorem}
Let $V : \textbf{X}\rightarrow{\textbf{Y}}$ and $W : \textbf{Y}\rightarrow{\textbf{E}}$ 
be DMC. Let $e_0[n]: \textbf{A}_n\rightarrow{\textbf{X}^n}$ be a sequence of channel encoders which achieves the capacity of $V$ as $n\rightarrow{\infty}$, let $M_n$ be a random variable uniformly distributed on message set $\textbf{M}_n$ where $\vert{}\textbf{M}_n\vert{}$ divides $\vert{}\textbf{A}_n\vert{}$. Assume Conjecture \ref{entropyconjecture} holds. Then for every sequence of c-universal$_2$ hash functions $f_{X_n}$ where $f_{x_n}^{-1} : \textbf{M}_n\rightarrow{\textbf{A}_n}$ and $X_n$ is independent of $M_n$, if we let $\vert{}\textbf{M}_n\vert{}=e^{nR}$, then
\begin{equation}
\limsup_{n\rightarrow{\infty}}\frac{1}{n}C_{1}(f_{X_n}(A_n)\vert{}E^nX_n)\leq{}\vert{}R-(C_V-C_U)\vert{}^+ \nonumber
\end{equation}
where $C_V$ is the capacity of the channel $V$ and $C_U$ is the capacity of the channel $U:=W\circ{}V$. Also $C_{1}(f_{X_n}(A_n)\vert{}E^nX_n)$ goes to $0$ as $n\rightarrow{\infty}$ for $R<(C_V-C_U)$. 
Let $\vert{}\textbf{M}_n\vert{}=e^{n(C_V-C_U)+(\sqrt{n})^{1+\gamma}L}$, then
\begin{equation}
\limsup_{n\rightarrow{\infty}}\frac{1}{(\sqrt{n})^{1+\gamma}}C_1(f_{X_n}(A_n)\vert{}E^nX_n)\leq{}\vert{}L+2U_c\vert{}\textbf{X}\vert{}\vert{}\textbf{E}\vert{}\vert{}^+ \nonumber
\end{equation}
for every $0<\gamma<1$. Also $C_{1}(f_{X_n}(A_n)\vert{}E^nX_n)$ goes to $0$ as $n\rightarrow{\infty}$ for $L<-2U_c\vert{}\textbf{X}\vert{}\vert{}\textbf{E}\vert{}$.
\end{thm}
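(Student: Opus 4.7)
\noindent\emph{Proof plan.}
My strategy is to chain three tools already in the paper: the hash Lemma~\ref{hashlemma} converts $C_1(f_{X_n}(A_n)\vert E^n X_n)$ into a bound in terms of $H_{1+s}(A_n\vert E^n)$ and $\log\vert\textbf{M}_n\vert$; Conjecture~\ref{entropyconjecture} at $s=1/\sqrt{n}$ replaces $H_{1+s}$ by the ordinary conditional entropy $H$ at an additive cost $O(\sqrt{n})$; and Lemma~\ref{D1lemma} combined with Theorem~\ref{maxinfotheorem} applied to the composed degraded channel $U=W\circ V$ under the encoder $e_0$ bounds $I(A_n;E^n)$ by $nC_U$ plus a controlled type-counting error. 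The c-universal$_2$ property forces every fibre of $f_{X_n}$ to have common size $\vert\textbf{A}_n\vert/\vert\textbf{M}_n\vert$, so $A_n=f_{X_n}^{-1}(M_n)$ is uniform on $\textbf{A}_n$ and independent of $X_n$, which means $H(A_n\vert E^n)=\log\vert\textbf{A}_n\vert-I(A_n;E^n)$ and Lemma~\ref{D1lemma} is applicable with $A_n$ in place of its $A$.

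Concretely I would take $s=1/\sqrt{n}$ and, using $C_1\le C_{1+s}$ (monotonicity of the R\'enyi divergence in the order), write
\[
C_1(f_{X_n}(A_n)\vert E^n X_n)\le \tfrac{1}{s}\log\bigl(1+\vert\textbf{M}_n\vert^{s} e^{-sH_{1+s}(A_n\vert E^n)}\bigr).
\]
When $\log\vert\textbf{M}_n\vert-H_{1+s}(A_n\vert E^n)>0$ I use $\log(1+x)\le\log(2x)$ to obtain $C_1\le\tfrac{\log 2}{s}+\log\vert\textbf{M}_n\vert-H_{1+s}(A_n\vert E^n)$, and by Conjecture~\ref{entropyconjecture} this upgrades to $C_1\le\sqrt{n}\log 2+\log\vert\textbf{M}_n\vert-H(A_n\vert E^n)+O(\sqrt{n})$; when that difference is nonpositive I use $\log(1+x)\le x$, obtaining $C_1\le\tfrac{1}{s}e^{s(\log\vert\textbf{M}_n\vert-H_{1+s}(A_n\vert E^n))}$, which decays geometrically whenever the exponent is $-\omega(\sqrt{n})$. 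Finally, to bound $I(A_n;E^n)$, Lemma~\ref{D1lemma} with smoothing parameter $\epsilon_n=2\vert\textbf{X}\vert\vert\textbf{E}\vert e^{-2n\delta^2}$ together with Theorem~\ref{maxinfotheorem} for $U_{e_0}$ (noting $\max_{P_X}I(X\wedge Z)=C_U$) yields
\[
I(A_n;E^n)\le nC_U+2n\delta\,U_c\vert\textbf{X}\vert\vert\textbf{E}\vert+\vert\textbf{X}\vert\log(n+1)+\epsilon_n\log\vert\textbf{A}_n\vert-(1-\epsilon_n)\log(1-\epsilon_n).
\]

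For the first-order claim with $\vert\textbf{M}_n\vert=e^{nR}$, I plug in $\log\vert\textbf{A}_n\vert=nC_V+o(n)$ and let $\delta_n\to 0$ with $n\delta_n^2\to\infty$; dividing the positive-regime bound by $n$ produces $\limsup\tfrac{1}{n}C_1\le(R-(C_V-C_U))^+$, while if $R<C_V-C_U$ the nonpositive-regime bound collapses $C_1$ to $\sqrt{n}\,e^{-\Omega(\sqrt{n})}\to 0$. For the second-order claim with $\vert\textbf{M}_n\vert=e^{n(C_V-C_U)+(\sqrt{n})^{1+\gamma}L}$, I choose $\delta_n=n^{(\gamma-1)/2}$; this makes $2n\delta_n\,U_c\vert\textbf{X}\vert\vert\textbf{E}\vert=2(\sqrt{n})^{1+\gamma}U_c\vert\textbf{X}\vert\vert\textbf{E}\vert$ reproduce exactly the advertised constant while $n\delta_n^2=n^\gamma\to\infty$ drives $\epsilon_n$ to zero super-polynomially. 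Assuming the standard second-order behaviour $\log\vert\textbf{A}_n\vert=nC_V-O(\sqrt{n})$ of a capacity-achieving channel code for $V$, each remaining residual---the $\sqrt{n}\log 2$ prefactor, the $O(\sqrt{n})$ from Conjecture~\ref{entropyconjecture}, $\vert\textbf{X}\vert\log(n+1)$, $\epsilon_n\log\vert\textbf{A}_n\vert$, and $nC_V-\log\vert\textbf{A}_n\vert$---is $o((\sqrt{n})^{1+\gamma})$ since $\gamma>0$; dividing by $(\sqrt{n})^{1+\gamma}$ yields the claimed $(L+2U_c\vert\textbf{X}\vert\vert\textbf{E}\vert)^+$, and when $L<-2U_c\vert\textbf{X}\vert\vert\textbf{E}\vert$ the nonpositive-regime bound gives $C_1\le\sqrt{n}\,e^{-\Omega(n^{\gamma/2})}\to 0$.

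The main technical obstacle is balancing two competing contributions in the typicality parameter $\delta$: shrinking $\delta$ reduces the additive max-information overhead $2n\delta\,U_c\vert\textbf{X}\vert\vert\textbf{E}\vert$ but enlarges the smoothing failure probability $\epsilon_n$, and making both simultaneously $o((\sqrt{n})^{1+\gamma})$ requires the window $\sqrt{(\log n)/n}\ll\delta\ll n^{(\gamma-1)/2}$, which is nonempty only for $\gamma>0$ strict---this is precisely why the second-order term in this approach is forced to be strictly larger than $\sqrt{n}$. The other, structural, obstacle is Conjecture~\ref{entropyconjecture} itself: the $O(\sqrt{n})$ passage from $H_{1+1/\sqrt{n}}(A_n\vert E^n)$ to $H(A_n\vert E^n)$ is what transfers the max-information-based bound on $I(A_n;E^n)$ back into the R\'enyi-entropy side of Lemma~\ref{hashlemma}, and without it the second-order leg of the argument collapses.
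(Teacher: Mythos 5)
Your proposal follows essentially the same route as the paper's proof: uniformity of $A_n$ from the c-universal$_2$ structure, Theorem~\ref{maxinfotheorem} for $U_{e_0}$ combined with Lemma~\ref{D1lemma} to lower-bound $H(A_n\vert E^n)$, then Lemma~\ref{hashlemma} at $s=1/\sqrt{n}$ with Conjecture~\ref{entropyconjecture} and monotonicity of the R\'enyi divergence, with the identical choice $\delta(n)=(\sqrt{n})^{-1+\gamma}$ producing the $2U_c\vert\textbf{X}\vert\vert\textbf{E}\vert$ constant. Your explicit case analysis of the $\log(1+x)$ term and your flagged assumption $\log\vert\textbf{A}_n\vert=nC_V-O(\sqrt{n})$ (which the paper likewise imports from the channel-dispersion literature) are only presentational differences, so the proposal is correct and matches the paper's argument.
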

\begin{proof}
Let $f_{X_n}$ be a c-universal$_2$ hash function where $f_{x_n}^{-1} : \textbf{M}_n\rightarrow{\textbf{A}_n}$ and $X_n$ is independent of $M_n$. Notice that since the messages are uniformly distributed, $f_{x_n}^{-1} : \textbf{M}_n\rightarrow{\textbf{A}_n}$ defines a random variable $A_n$ which is uniformly distributed on $\textbf{A}_n$. This implies that $C_1(A_n\vert{}E^n)=D(P_{A_nE^n}\vert{}\vert{}P_{A_n}\times{}P_{E{^n}})$.
By Theorem \ref{maxinfotheorem}, for every $\delta>0$ it holds that
\begin{equation} \label{Imaxeq}
I^{\epsilon_n}_{max}(U_{e_0})\leq{}nC_{U}+2n\delta{}U_c\vert{}\textbf{X}\vert{}\vert{}\textbf{E}\vert{}+\vert{}\textbf{X}\vert{}\log{(n+1)}
\end{equation}
where $\epsilon_n=2\vert{}\textbf{X}\vert{}\vert{}\textbf{E}\vert{}e^{-2n\delta^2}$. Also by Lemma \ref{D1lemma}
\begin{align} \label{C1eq}
&C_1(A_n\vert{}E^n)\leq{}I^\epsilon_{max}(U_{e_0})-(1-\epsilon)\log(1-\epsilon) \nonumber \\
&+\epsilon\log\vert{}\textbf{A}_n\vert{}
\end{align}
for every $\epsilon<1-e^{-1}$. Then since $\log\vert{}\textbf{A}_n\vert{}=O(n)$, by (\ref{Imaxeq}) and (\ref{C1eq}) we have $C_1(A_n\vert{}E^n)\leq{}nC_{U}+2n\delta{}U_c\vert{}\textbf{X}\vert{}\vert{}\textbf{E}\vert{}+O(\log{}n)$ for sufficiently large $n$ and $\delta>0$ fixed. Let us analyse the expression $2n\delta{}U_c\vert{}\textbf{X}\vert{}\vert{}\textbf{E}\vert{}$. Since we don't want this expression to affect the first order coding rate, we let $\delta(n)\rightarrow{0}$ as $n\rightarrow{\infty}$ under the constraint $ne^{-2n\delta^2(n)}\leq{}O(\log{}n)$. Then we can choose $\delta(n)$ such that $\delta(n)=(\sqrt{n})^{-1+\gamma}$ where $0<\gamma<1$ and $\gamma$ is fixed. 
We also have from \cite{polyanskiy,tan} that $\log\vert{}\textbf{A}_n\vert{}\geq{}nC_{V}+O(\sqrt{n}).$
So 
\begin{align} \label{conditionalentropyeq}
&H(A_n\vert{}E^n)\geq{}n(C_V-C_{U})-2U_c\vert{}\textbf{X}\vert{}\vert{}\textbf{E}\vert{}(\sqrt{n})^{1+\gamma} \nonumber \\
&+O(\sqrt{n}). 
\end{align}
Let $\vert{}\textbf{M}_n\vert{}=e^{nR}$. By Lemma \ref{hashlemma} we have for $\epsilon=1$ and $s>0$
\begin{align} \label{firstordereq}
&\limsup_{n\rightarrow{\infty}}\frac{1}{n}C_{1+s}(f_{X_n}(A_n)\vert{}E^nX_n) \nonumber \\
&\leq{}\limsup_{n\rightarrow{\infty}}\frac{1}{n}(s^{-1}\log{}(1+M_n^se^{-sH_{1+s}(A_n\vert{}E^n)})) \nonumber \\
					                                           &=\vert{}\limsup_{n\rightarrow{\infty}}\frac{1}{n}(nR-H_{1+s}(A_n\vert{}E^n))\vert{}^+. 
\end{align}
Let $s=\frac{1}{\sqrt{n}}$. Then by \eqref{assumption}, (\ref{conditionalentropyeq}), (\ref{firstordereq}),  and since $D_{1+s}(P\Vert{}\vert{}Q)$ is monotonically increasing in $s\in[0,1]$ \cite{hayashi-tan}, 
\begin{equation} \label{nfirstordereq}
\limsup_{n\rightarrow{\infty}}\frac{1}{n}C_{1}(f_{X_n}(A_n)\vert{}E^nX_n)\leq{}\vert{}R-(C_V-C_U)\vert{}^+.
\end{equation}
Without normalizing by $n$ we also have for $s=\frac{1}{\sqrt{n}}$
\begin{align} \label{strongfirstordereq}
&C_{1+\frac{1}{\sqrt{n}}}(f_{X_n}(A_n)\vert{}E^nX_n) \nonumber \\ 
&\leq{}\sqrt{n}\log{}(1+e^{\frac{1}{\sqrt{n}}(nR-H_{1+\frac{1}{\sqrt{n}}}(A_n\vert{}E^n))}).
\end{align}
Then by (\ref{assumption}), (\ref{conditionalentropyeq}), (\ref{strongfirstordereq}), , and since $D_{1+s}(P\Vert{}\vert{}Q)$ is monotonically increasing in $s\in[0,1]$ , for sufficiently large $n$
\begin{align} \label{nstrongfirstordereq}
&C_{1}(f_{X_n}(A_n)\vert{}E^nX_n) \nonumber \\
&\leq{}\sqrt{n}\log{}(1+e^{\sqrt{n}(R-(C_V-C_U))+O((\sqrt{n})^\gamma)}).
\end{align}
This implies that $C_{1}(f_{X_n}(A_n)\vert{}E^nX_n)$ goes to $0$ as $n\rightarrow{\infty}$ for $R<(C_V-C_U)$. Let $\vert{}\textbf{M}_n\vert{}=e^{n(C_V-C_U)+(\sqrt{n})^{1+\gamma}L}$. Then similarly as in (\ref{conditionalentropyeq})--(\ref{nfirstordereq}),
\begin{align*} \label{secondordereq}
\limsup_{n\rightarrow{\infty}}\frac{1}{(\sqrt{n})^{1+\gamma}}C_1(f_{X_n}(A_n)\vert{}E^nX_n)\leq{}\vert{}L+2U_c\vert{}\textbf{X}\vert{}\vert{}\textbf{E}\vert{}\vert{}^+.
\end{align*}
Also $C_{1}(f_{X_n}(A_n)\vert{}E^nX_n)$ goes to $0$ as $n\rightarrow{\infty}$ for $L<-2U_c\vert{}\textbf{X}\vert{}\vert{}\textbf{E}\vert{}$ similarly as in (\ref{strongfirstordereq})--(\ref{nstrongfirstordereq}).
\end{proof}
\begin{rem}  \label{secondorderremark}
In the proof of Theorem~\ref{achievabilitytheorem}, we used $ne^{-2n\delta^2(n)}\leq{}O(\log{}n)$ as a constraint for comparison with the case where $\delta>0$ is fixed. However, there are two terms which depend on $\delta(n)$, namely  $n\delta(n)$ and $ne^{-2n\delta^2(n)}$. Depending on the choice of $\delta(n)$, the order of the bigger term will be the second-order, and it is easy to see that the second-order will be strictly larger than $\sqrt{n}$ independent of how $\delta(n)$ is chosen. 
Notice that we also do not use the upper bound in Remark \ref{csiszarbound} in Theorem \ref{achievabilitytheorem} because it would imply a bigger second-order. 
There is also no mention of the average error probability  $\varepsilon$ 
since $\varepsilon$ just affects the $\sqrt{n}$ order \cite{polyanskiy,tan}. 
Similarly, we could have used maximal error probability, which wouldn't have affected our results. 
\end{rem}
\begin{rem} 
Notice that the maximum first-order achievable rate in Theorem~\ref{achievabilitytheorem} is $(C_V-C_U)$, which is the weakly symmetric degraded wiretap capacity \cite[Proposition 3.2]{bloch-barros}. Furthermore, Theorem~\ref{achievabilitytheorem} can be generalized from a degraded wiretap channel to a wiretap channel where the eavesdropper's channel is less capable than the legitimate receiver, i.e., $I(P_X;V)\geq{}I(P_X;W)$ for every $P_X$ where $V : \textbf{X}\rightarrow{\textbf{Y}}$ and $W : \textbf{X}\rightarrow{\textbf{E}}$ be DMC \cite[Definition 3.12]{bloch-barros}. First, as in the proof of \cite[Theorem 4.2]{tan}, choose a type $P$ which is closest in variational distance to the capacity-achieving input distribution $P_X$, where $C=I(P_X;V)-I(P_X;W)$ \cite[Corollary 3.5]{bloch-barros}. Then use in the generalized theorem the achievability bound for constant composition codes from \cite[Theorem 4.2]{tan} together with (\ref{fixedtypeeq}) which is for fixed type similarly as in Theorem \ref{achievabilitytheorem}. Notice that the first-order rate $(I(P;V)-I(P;W))$ approaches the capacity as $n\rightarrow{\infty}$, whereas the second-order rate doesn't change.
\end{rem}

\section{Conclusion}

We have generalized the upper bound on $\epsilon$-max information and proved an upper bound on the mutual information between any two random variables, taking values on finite sets by $\epsilon$-max information and an error term. We proved first- and second-order achievability results for fixed error probability by constructing wiretap codes using c-universal$_2$ hash functions and assuming a conjecture holds true, where the second-order turns out to be strictly larger than $\sqrt{n}$. 

\bibliographystyle{IEEEtran}
\bibliography{references}

\begin{thebibliography}{10}
\providecommand{\url}[1]{#1}
\csname url@samestyle\endcsname
\providecommand{\newblock}{\relax}
\providecommand{\bibinfo}[2]{#2}
\providecommand{\BIBentrySTDinterwordspacing}{\spaceskip=0pt\relax}
\providecommand{\BIBentryALTinterwordstretchfactor}{4}
\providecommand{\BIBentryALTinterwordspacing}{\spaceskip=\fontdimen2\font plus
\BIBentryALTinterwordstretchfactor\fontdimen3\font minus \fontdimen4\font\relax}
\providecommand{\BIBforeignlanguage}[2]{{%
\expandafter\ifx\csname l@#1\endcsname\relax
\typeout{** WARNING: IEEEtran.bst: No hyphenation pattern has been}%
\typeout{** loaded for the language `#1'. Using the pattern for}%
\typeout{** the default language instead.}%
\else
\language=\csname l@#1\endcsname
\fi
#2}}
\providecommand{\BIBdecl}{\relax}
\BIBdecl

\bibitem{tyagi}
\BIBentryALTinterwordspacing
H.~Tyagi and A.~Vardy, ``Universal hashing for information theoretic security,'' \emph{arXiv:1412.4958}, 2016. [Online]. Available: \url{https://arxiv.org/abs/1412.4958}
\BIBentrySTDinterwordspacing

\bibitem{hayashi-tan}
M.~Hayashi and V.~Y.~F. Tan, ``Equivocations, exponents, and second-order coding rates under various rényi information measures,'' \emph{IEEE Transactions on Information Theory}, vol.~63, no.~2, pp. 975--1005, 2017.

\bibitem{wyner}
A.~D. Wyner, ``The wire-tap channel,'' \emph{The Bell System Technical Journal}, vol.~54, no.~8, pp. 1355--1387, 1975.

\bibitem{generalwiretap}
I.~Csiszár and J.~Körner, ``Broadcast channels with confidential messages,'' \emph{IEEE Transactions on Information Theory}, vol.~24, no.~3, pp. 339--348, 1978.

\bibitem{yang-schaefer-poor}
W.~Yang, R.~F. Schaefer, and H.~V. Poor, ``Wiretap channels: Nonasymptotic fundamental limits,'' \emph{IEEE Transactions on Information Theory}, vol.~65, no.~7, pp. 4069--4093, 2019.

\bibitem{variation2}
M.~H. Yassaee, M.~R. Aref, and A.~Gohari, ``Non-asymptotic output statistics of random binning and its applications,'' in \emph{2013 IEEE International Symposium on Information Theory}, 2013, pp. 1849--1853.

\bibitem{hayashi}
M.~Hayashi, ``Exponential decreasing rate of leaked information in universal random privacy amplification,'' \emph{IEEE Transactions on Information Theory}, vol.~57, no.~6, pp. 3989--4001, 2011.

\bibitem{info}
I.~Csiszár and J.~Körner, \emph{Information Theory: Coding Theorems for Discrete Memoryless Systems}, 2nd~ed.\hskip 1em plus 0.5em minus 0.4em\relax Cambridge University Press, 2011.

\bibitem{moritz-boche}
M.~Wiese and H.~Boche, ``Semantic security via seeded modular coding schemes and ramanujan graphs,'' \emph{IEEE Transactions on Information Theory}, vol.~67, no.~1, pp. 52--80, 2021.

\bibitem{polyanskiy}
Y.~Polyanskiy, H.~V. Poor, and S.~Verdu, ``Channel coding rate in the finite blocklength regime,'' \emph{IEEE Transactions on Information Theory}, vol.~56, no.~5, pp. 2307--2359, 2010.

\bibitem{tan}
\BIBentryALTinterwordspacing
V.~Y.~F. Tan, ``Asymptotic estimates in information theory with non-vanishing error probabilities,'' \emph{Foundations and Trends® in Communications and Information Theory}, vol.~11, no. 1-2, pp. 1--184, 2014. [Online]. Available: \url{http://dx.doi.org/10.1561/0100000086}
\BIBentrySTDinterwordspacing

\bibitem{bloch-barros}
M.~Bloch and J.~Barros, \emph{Physical-Layer Security: From Information Theory to Security Engineering}.\hskip 1em plus 0.5em minus 0.4em\relax Cambridge University Press, 2011.

\end{thebibliography}

\end{document}